\newtheorem{theorem}{Theorem}[section]
\newtheorem{lemma}{Lemma}[section]
\theoremstyle{definition}
\newtheorem{definition}{Definition}[section]
\newcommand{\X}{\mathbf{X}}
\newcommand{\x}{\bm{x}}
\newcommand{\B}{\bm{\beta}}
\newcommand{\hB}{\widehat{\bm{\beta}}}
\newcommand{\tBR}{\widetilde{\bm{\beta}}_{\mathbf{X}^*}}
\newcommand{\R}{\mathbf{X}^*}
\newcommand{\Y}{\bm{y}}
\newcommand{\hX}{\bm{h}_X}
\newcommand{\Q}{\mathbf{Q}}
\newcommand{\h}{\bm{h}}
\newcommand{\tr}{\text{tr}}
\newcommand{\var}{\text{Var}}
\newcommand{\LL}{\mathbf{L}}
\newcommand{\s}{\bm{l}}
\def\T{\intercal} %%%transpose operator
\DeclareMathOperator{\bias}{bias}
\begin{document}
\title{LowCon: A design-based subsampling approach in a misspecified linear model}
\date{}

% \author{Cheng Meng, Rui Xie, Abhyuday Mandal, Xinlian Zhang, Wenxuan Zhong, and Ping Ma}
% \affil{Department of Statistics, University of Georgia}

% \author{Cheng Meng, Rui Xie, Abhyuday Mandal, Xinlian Zhang, Wenxuan Zhong, and Ping Ma}
% \affil{Department of Statistics, University of Georgia}
\author[1]{Cheng Meng}
\author[2]{Rui Xie}
\author[3]{Abhyuday Mandal}
\author[4]{Xinlian Zhang}
\author[3]{Wenxuan Zhong}
\author[3,*]{Ping Ma}

\affil[1]{Institute of Statistics and Big data, Renmin University of China}
\affil[2]{Department of Statistics and Data Science, University of Central Florida}
\affil[3]{Department of Statistics, University of Georgia}
\affil[4]{Division of Biostatistics and Bioinformatics, University of California, San Diego}
\affil[*]{Corresponding author: Ping Ma, pingma@uga.edu}

\maketitle

\doublespacing

\begin{abstract}
We consider a measurement constrained supervised learning problem, that is, (1) full sample of the predictors are given;
(2) the response observations are unavailable and expensive to measure.
Thus, it is ideal to select a subsample of predictor observations,  measure the corresponding responses, and then fit the supervised learning model on the subsample of the predictors and responses.
However, model fitting is a trial and error process, and a postulated model for the data could be misspecified.
Our empirical studies demonstrate that most of the existing subsampling methods have unsatisfactory performances when the models are misspecified.
In this paper, we develop a novel subsampling method, called ``LowCon'', which outperforms the competing methods when the working linear model is misspecified.
Our method uses orthogonal Latin hypercube designs to achieve a robust estimation.
We show that the proposed design-based estimator approximately minimizes the so-called ``worst-case" bias with respect to many possible misspecification terms.
Both the simulated and real-data analyses demonstrate the proposed estimator is more robust than several subsample least squares estimators obtained by state-of-the-art subsampling methods. 
\end{abstract}

\noindent%
{\it Keywords:}  Least squares estimation; Experimental design; Condition number; Worst-case MSE.
\vfill

%%%%%%%%%%%%%%%%%%%%%%%%%%%%%%%%%%%%%%%%%%%%%%%%%%%%%%%%%%%%%%%%%%%%%%%%%%%%%%%%%%%%%%%%%%
%                                                                                        %
%                                                                                        %
%                                                                                        %
%                                                                                        %
%                                                                                        %
%                                                                                        %
%                           Introduction                                                 %
%                                                                                        %
%                                                                                        %
%                                                                                        %
%                                                                                        %
%                                                                                        %
%                                                                                        %
%%%%%%%%%%%%%%%%%%%%%%%%%%%%%%%%%%%%%%%%%%%%%%%%%%%%%%%%%%%%%%%%%%%%%%%%%%%%%%%%%%%%%%%%%%
\newpage
%\spacingset{1.5} % DON'T change the spacing!
\section{Introduction}\label{sec:intro}
Measurement constrained supervised learning is an emerging problem in machine learning \citep{settles2012active,wang2017computationally,derezinski2018leveraged}.
In this problem, the predictor observations (also called unlabeled data points in machine learning literature) are collected, but the response observations are unavailable and difficult or expensive to obtain.
Considering speech recognition as an example, one may easily get plenty of unlabeled audio data, but the accurate labeling of speech utterances is extremely time-consuming and requires trained linguists. 
For an unlabeled speech of one minute, it can take up to ten minutes for the word-level annotation and nearly seven hours for the phoneme-level annotation \citep{zhu2005semi}.
A more concrete example is the task of predicting the soil functional property, i.e., the property related to a soil's capacity to support essential ecosystem service \citep{hengl2015mapping}.
Suppose one wants to model the relationship between the soil functional property and some predictors that can be easily derived from %Earth
remote sensing data.
To get the response, the accurate measurement of the soil property, a sample of soil from the target area, is needed.
The response thus can be extremely time-consuming or even impractical to obtain, especially when the target area is off the beaten path. Thus, it is ideal to select a subsample of predictor observations,  measure the corresponding responses, and then fit a supervised learning model on the subsample of the predictors and responses.

In this paper, we study the subsampling method and postulate a general linear model for linking the response and predictors.
One of the natural subsampling methods is the uniform subsampling method (also called the simple random subsampling method), i.e., selecting a subsample with the uniform sampling probability. 
For many problems, %it is straightforward to construct ``worst-case" input for which 
uniform subsampling method performs poorly \citep{cochran2007sampling,thompson2012simple}.
Motivated by the poor performance of uniform sampling, there has been a large number of work dedicated to developing non-uniform random subsampling methods that select a subsample with a data-dependent non-uniform sampling probability \citep{mahoney2011randomized}.
%These methods are called weighted random subsampling methods.
%To overcome the high variance problem, one can assign different sampling probabilities in random subsampling, resulting in weighted random subsampling methods.
One popular choice of the sampling probability is the normalized statistical leverage scores, leading to the {\it algorithmic leveraging} approach \citep{ma2015leveraging,meng2017effective,zhang2018statistical,ma2020asymptotic}.
Such an approach has already yielded impressive algorithmic and theoretical benefits in linear regression models \citep{mahoney2011randomized,drineas2012fast,ma2015statistical}.
Besides linear models, the idea of {\it algorithmic leveraging} is also widely applied in generalized linear regression \citep{wang2018optimal,ai2019optimal,yu2020optimal}, 
quantile regression \citep{AI2020101512,wang2020optimal}, streaming time series~\citep{xie2019online}, and the Nystr$\ddot{\mbox{o}}$m method \citep{alaoui2015fast}.

Different from random subsampling methods, there also exist some deterministic subsampling methods which select the subsample based on certain rules, especially optimality criteria developed in the design of experiments \citep{pukelsheim2006optimal}, e.g., $A$-, $D$- and $E$-optimality. 
\citet{wang2017computationally} proposed a computationally tractable subsampling approach based on the $A$-optimality criterion.
$D$-optimality criterion was considered in \citet{wang2018information}. %, in which the author introduced the information-based optimal subsample selection method for selecting the most informative subsample.

\begin{figure}[ht]
\begin{center}
\begin{tabular}{ccc}
\includegraphics[scale = .8]{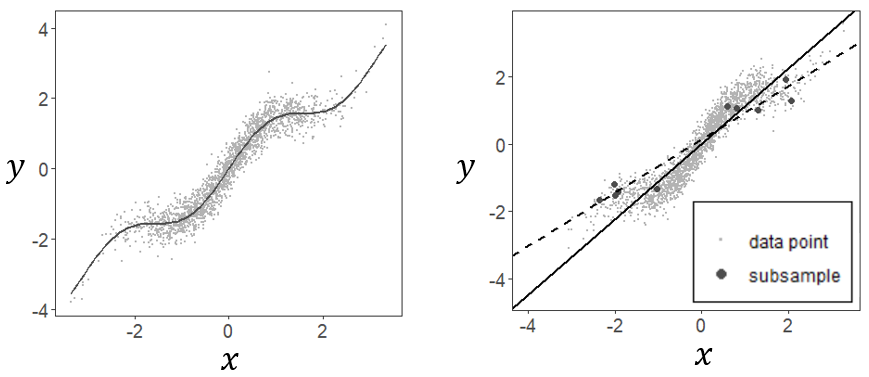}
\end{tabular}
  \caption{The data (gray dots) are generated from a partial-linear model (gray curve). When the non-linear term is omitted, the fitted line (dashed line) based on a leveraging subsample (black dots) deviates severely from the full-sample least squares regression line (solid line). }\label{plot0}
  \end{center}
\end{figure}

While the existing subsampling methods have already shown extraordinary performance on coefficient estimation and model prediction, their performance highly relies on the model specification.
However, the model specification is a trial and error process, during which a postulated model could be misspecified.  When the model is misspecified,  most subsampling methods may lead to unacceptable results. We now demonstrate the issue of model misspecification using a toy example. 
In this example, data are generated from the model $
y_i=x_i+\sin(x_i^2)/2+\epsilon_i$, $i=1,2,\ldots,n$, where $\{\epsilon_i\}_{i=1}^n$ are the i.i.d. standard normal errors. 
In Figure~\ref{plot0}, the data points (gray points) and the true function (the gray curve) are shown in the left panel. 
The right panel shows the full-sample linear regression line (the solid line) based on  $x_i$ only, without the
nonlinear term.
We postulate a linear model without the nonlinear term and randomly select a subsample of size ten (black dots) using the leverage subsampling method \citep{ma2015statistical}.
The subsample linear regression line (the dashed line)  deviates severely from the solid line. 
Such an observation suggests that the performance of a subsample least squares estimator may deteriorate significantly when the model is misspecified.
%Take the random subsampling method as an example; note that different model assumptions yield different sampling probabilities.
%Hence the selected subsample may be misleading when the model is misspecified, resulting in poor estimation or prediction. 
The poor performance under model misspecifications is not unique to random subsampling methods.  
%For deterministic subsampling methods, the key to the success of these methods is the optimality criteria they used.
The success of different deterministic subsampling methods depends on the optimality criteria being used.
The optimality criteria, however, differ from model to model. 
An optimality criterion derived from a postulated model does not necessarily lead to a decent subsampling method for the true model.
We provide more discussion of this example in the Supplementary Material.
%We now demonstrate the issue of model misspecification using a toy example. 
%In this example, data are generated from the model $
%y_i=x_i+\sin(x_i^2)/2+\epsilon_i$, $i=1,2,\ldots,n$, where $\{\epsilon_i\}_{i=1}^n$ are the i.i.d. standard normal errors. 
%In Figure~\ref{plot0}, the data points (gray points) and the true function (gray curve) are shown in the left panel. 
%The solid line in the right panel shows the full-sample linear regression line based on only $x_i$, with the
%nonlinear term removed. 
%We postulate a linear model without the nonlinear term and randomly select a subsample of size ten (black dots) using the leverage subsampling method \citep{ma2015statistical}.
%The subsample linear regression line is shown as the dashed line, which deviates severely from the solid line. 
%Such an observation suggests the performance of a subsample least squares estimator may deteriorate significantly when the model is misspecified.
%We provide more discussion of this example in the Supplementary Material.

In practice, the true underlying model is almost always unknown to practitioners. 
The subsampling hence is highly desirable to be robust to possible model misspecification. 
To achieve the goal, \citet{tsao2012subsampling} proposed to construct a robust estimator using bootstrap. 
One limitation of this method is that it can not be applied to the measurement-constrained setting since the response value for every predictor is needed in this method to compute the estimator.
Another related approach is \citet{pena1999fast}, which aims to carefully select some observations to generate starting points to compute a robust estimator.  
%Subsampling methods yielding robust estimations in the measurement-constrained setting are still lacking. 
The literature on subsampling methods that yield robust estimations in the measurement-constrained setting is still meager.

In this paper, we bridge the gap by proposing a statistical analysis of the subsampling method in a linear model containing unknown misspecification.
We do so in the context of coefficient estimation via the least-squares on a subsample taken from the full sample.
Our major theoretical contribution is to provide an analytic framework for evaluating the mean squared error (MSE) of the subsample least squares (SLS) estimator in a misspecified linear model.
Within this framework, we show that it is very easy to construct a ``worst-case" sample and a misspecification term for which an SLS estimator will have an arbitrarily large mean squared error.
We also show that an SLS estimator is robust if the information matrix of the subsample has a relatively low condition number, a traditional concept in numerical linear algebra \citep{trefethen1997numerical}.

Based on these theoretical results, we propose and analyze a novel subsampling algorithm, called ``LowCon".
LowCon is designed to select a subsample, which balances the trade-off between bias and variance,  to yield a robust estimation of coefficients.
This algorithm involves selecting the subsample, which approximates a set of orthogonal Latin hypercube design points \citep{ye1998orthogonal}.
We show the proposed SLS estimator has a finite upper bound of the mean squared error, and it approximately minimizes the ``worst-case" bias, with respect to all the possible misspecification terms.
Our main empirical contribution is to provide a detailed evaluation of the robustness of the SLS estimators on both synthetic and real datasets.
The empirical results indicate the proposed estimator is the only one, among all cutting-edge subsampling methods, that is robust to various types of misspecification terms.
%We thus recommend the use of LowCon in the measurement-constrained linear regression problem in the future.

The remainder of the paper is organized as follows. 
We start in Section 2 by introducing the misspecified linear model and deriving the so-called ``worst-case" MSE. 
In Section 3, we present the proposed LowCon subsampling algorithm and its theoretical properties.
We examine the performance of the proposed SLS estimator through extensive simulation and two real-world examples in Sections 4 and  5, respectively. 
Section 6 concludes the paper, and the technical proofs are relegated to the Supplementary Material.

%%%%%%%%%%%%%%%%%%%%%%%%%%%%%%%%%%%%%%%%%%%%%%%%%%%%%%%%%%%%%%%%%%%%%%%%%%%%%%%%%%%%%%%%%%
%                                                                                        %
%                                                                                        %
%                                                                                        %
%                                                                                        %
%                                                                                        %
%                                                                                        %
%                           Subsampling Methods in Linear Model                          %
%                                                                                        %
%                                                                                        %
%                                                                                        %
%                                                                                        %
%                                                                                        %
%                                                                                        %
%%%%%%%%%%%%%%%%%%%%%%%%%%%%%%%%%%%%%%%%%%%%%%%%%%%%%%%%%%%%%%%%%%%%%%%%%%%%%%%%%%%%%%%%%%
\section{Model Setup}

In this section, we first introduce the linear model that contains unknown misspecification.
We then consider the subsample least squares estimator and derive the mean squared error of these estimators under this model.
We show that an SLS estimator is robust if the information matrix of the selected subsample has a relatively low condition number.

Throughout this paper, $||\cdot||$ represents the Euclidean norm.
Let $\lambda_{min}(\cdot)$ and $\lambda_{max}(\cdot)$ be the smallest and the largest eigenvalue of a matrix, and $\bm{\mu}_{min}(\cdot)$ and $\bm{\mu}_{max}(\cdot)$ be the corresponding eigenvectors, respectively. 
We use $s_1(\cdot)$ and $s_p(\cdot)$ to denote the largest and the smallest non-zero singular value of a matrix with $p$ columns, respectively.

\subsection{Misspecified Linear Model}
Suppose the underlying true model has the form
\begin{eqnarray}\label{mmm}
y_i=\x_i^\T\B_0+u_i, \quad i=1,2,\ldots,n,
\end{eqnarray}
where $y_i$'s are the responses, $\x_i$'s are the predictors, $\B_0\in \mathbb{R}^p$ ($p\ll n$) is the vector of unknown coefficients, 
 the random errors $\{u_i\}_{i=1}^n$ are independently distributed, and $u_i$ follows a non-centered normal distribution $N(h(\x_i),\sigma^2)$, $i=1,\ldots,n$.
Let $\mathcal{X}$ be the design space.
In this paper, we assume that the unknown multivariate function $h$ satisfies
\begin{eqnarray}\label{hhh1}
\underset{\x\in\mathcal{X}}{\max}\frac{|h(\x)|}{||\x||}=\alpha,
\end{eqnarray}
where $\alpha $ is a finite positive constant.
When $\x_i=(x_{i1},\ldots,x_{ip})^\T$ has finite values, some examples of $h$ include $h(\x_i)=\sin(x_{i1})$ and $h(\x_i)=x_{i1}x_{i2}$. 
Let $\Y=(y_1, \ldots, y_n)^{\T}$ be the response vector, $\X=(\x_1, \ldots, \x_n)^{\T}$ be the predictor matrix, and $\hX=(h(\x_1),\ldots,h(\x_n))^\T$ be the misspecification term. 
For model identifiability, we assume the matrix $[\X;\hX]$ has a full column rank. Under this assumption, we exclude the case that $h(\x)$ is a linear function of $\x$, i.e., $h(\x_i)$ cannot be a linear combination of $\x_{i1},\ldots,\x_{ip}$.
%That is, $h(\x)$ must be a nonlinear function of $\x$. %, e.g., $h(\x)=sin(x_1)$

% Such problem of model misspecification has been investigated in the field of experimental design by studying ``model-robust'' designs \citep{chang199628,pukelsheim1993optimal}. 
% However, the existing work in the ``model-robust'' design cannot be directly applied to the problem of our interest since a subsample cannot be assigned freely within a continuous design space. 

We consider the scenario that practitioners have no prior information on the true model (\ref{mmm}) and  postulate a classical linear model, 
\begin{eqnarray}\label{mmm2}
y_i=\x_i^\T\B_0+\epsilon_i, \quad i=1,2,\ldots,n,
\end{eqnarray}
where the random errors $\{\epsilon_i\}_{i=1}^n$  are i.i.d. and follow a normal distribution with mean zero and constant variance $\sigma^2$, i.e.,  $N(0,\sigma^2)$.
Model (\ref{mmm2}) is thus a misspecified linear model of the true model (\ref{mmm}). 
Fitting model (\ref{mmm2}) without taking into account the model misspecification may result in the degenerated performance 
%\textcolor{blue}{(what is degenerated performance?)} 
of the coefficient estimation and model prediction.
For example, the full-sample ordinary least squares (OLS) estimate, known as the best linear unbiased estimate, is a biased estimate of the true coefficient when the model is misspecified \citep{box1959basis}.
More discussion on misspecified linear models can be found in \citet{kiefer1975optimal} and \citet{sacks1978linear}.

In our measurement-constrained setting, practitioners are given the full sample of predictors $\{\x_i\}_{i=1}^n$.
%Although required as the input in the model (\ref{mmm}), t
The responses $\{y_i\}_{i=1}^n$ in model (\ref{mmm}), however, are hidden unless explicitly requested.  
Practitioners are then allowed to reveal a subset of $\{y_i\}_{i=1}^n$, denoted by $\Y^*=(y^*_1,\ldots,y^*_r)^\T$, where $p< r\ll n$.
The goal is to estimate the true coefficient $\B_0$ using $(\x_i^*, y_i^*)$, where $i = 1, \ldots, r$, and $\x_i^*$ is the corresponding predictor for $y_i^*$. 
%The subsample predictor matrix is denoted by $\R=(\bm{x}_1^*,\ldots,\bm{x}_r^*)^\T$.
A natural estimator for the coefficient $\B_0$ is the subsample least squares estimator \citep{wang2017computationally}, 
\begin{eqnarray*}
\tBR=(\mathbf{X}^{*\T}\mathbf{X}^*)^{-1}\mathbf{X}^{*\T}\Y^*,
\end{eqnarray*}
where $\R=(\bm{x}_1^*,\ldots,\bm{x}_r^*)^\T$.
We derive the mean squared error (MSE) and the worst-case MSE of this estimator, in the next subsection.

% Without knowing the true model~(\ref{mmm}), the practitioner may postulate that the selected subsample $\{{\x_i^*,y_i^*}\}_{i=1}^r$ follows the classical linear model
% \begin{equation}\label{model1}
% y_i^*=\x_i^{*\T}\B_0+\epsilon_i, \quad i=1,2,\ldots,r,
% \end{equation}
% where the random error 
% $\epsilon_i$ follows the i.i.d. normal distribution $N(0,\sigma^2)$.
% Note that the existence of the misspecification term in the true model (\ref{mmm}) may affect the estimation and the prediction. 

\subsection{Worst-case MSE}
Let $\Q=(\mathbf{X}^{*\T}\mathbf{X}^*)^{-1}\mathbf{X}^{*\T}$ and $\h=(h(\x^*_1),\ldots,h(\x^*_r))^\T \in \mathbb{R}^r$. 
The MSE of the estimator $\tBR$ (conditional on $\X$) thus can be decomposed as
\begin{align}
\text{MSE}(\tBR) 
& =\tr(\var(\tBR))+[\bias(\tBR)]^\T[\bias(\tBR)]\nonumber\\
& =\sigma^2\tr[(\mathbf{X}^{*\T}\mathbf{X}^*)^{-1}]+[(\mathbf{X}^{*\T}\mathbf{X}^*)^{-1}\mathbf{X}^{*\T}\h]^\T[(\mathbf{X}^{*\T}\mathbf{X}^*)^{-1}\mathbf{X}^{*\T}\h]\nonumber\\
& =\sigma^2\tr[(\mathbf{X}^{*\T}\mathbf{X}^*)^{-1}]+\h^\T\Q^\T\Q\h\label{smse},
\end{align}
%Here, the variance term $\sigma^2\tr[(\mathbf{X}^{*\T}\mathbf{X}^*)^{-1}]$ comes from the random noise and 
where the bias term $\h^\T\Q^\T\Q\h$ is associated with the model misspecification.
Note that when the bias term vanishes, $\h_X=\mathbf{0}$, i.e., when the model is correctly specified, minimizing MSE is equivalent to minimizing the variance term $\sigma^2\tr[(\mathbf{X}^{*\T}\mathbf{X}^*)^{-1}]$.
Further discussion following this line of thinking can be found in \citet{wang2017computationally} and \citet{wang2018information}, in which the authors focused on selecting the subsample that minimizes the variance term.
In our setting, where the model is misspecified, however, minimizing the variance term does not necessarily lead to a small MSE. 

Recall that our goal is to select a subsample such that the corresponding SLS estimator is robust to various model misspecification.
Since the misspecification term $\hX$ is unknown to practitioners, a natural and intuitive approach is to find the ``minimax" subsample that minimizes the so-called ``worst-case" MSE, i.e., the maximum value of $\text{MSE}(\tBR)$ with respect to all the possible choices of the misspecification term $\hX$.
The following lemma gives an explicit form of the worst-case MSE; the proof can be found in the Supplementary Material.

\begin{lemma}[Worst-case MSE]\label{lem1}
Under the regularity condition (\ref{hhh1}), the following inequality holds,
\begin{eqnarray}\label{lem_MSE}
\text{MSE}(\tBR) \leq \sigma^2\tr[(\mathbf{X}^{*\T}\mathbf{X}^*)^{-1}]+\alpha^2\frac{\tr(\mathbf{X}^{*\T}\mathbf{X}^*)}{\lambda_{min}(\mathbf{X}^{*\T}\mathbf{X}^*)}.
\end{eqnarray}
The right-hand side of (\ref{lem_MSE}) is called the worst-case MSE.
%and it can be achieved when
%\begin{eqnarray*}
%\h=\sqrt{\alpha^2\tr(\mathbf{X}^{*\T}\mathbf{X}^*)}\cdot\bm{\mu}_{max}(\Q^\T\Q).
%\end{eqnarray*}
\end{lemma}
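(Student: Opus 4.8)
The plan is to start from the exact MSE decomposition in~(\ref{smse}) and leave the variance term $\sigma^2\tr[(\R^\T\R)^{-1}]$ untouched, since it appears verbatim on the right-hand side of~(\ref{lem_MSE}). The whole task therefore reduces to bounding the bias term $\h^\T\Q^\T\Q\h$ by $\alpha^2\tr(\R^\T\R)/\lambda_{min}(\R^\T\R)$. I would factor this bias into a ``matrix part'' and a ``vector part'' via the Rayleigh-quotient estimate $\h^\T\Q^\T\Q\h \leq \lambda_{max}(\Q^\T\Q)\,\|\h\|^2$, valid because $\Q^\T\Q$ is symmetric positive semidefinite, and then control each factor separately.

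For the matrix part, I would compute $\Q^\T\Q = \R(\R^\T\R)^{-2}\R^\T$ and note that its nonzero eigenvalues coincide with those of $(\R^\T\R)^{-2}(\R^\T\R) = (\R^\T\R)^{-1}$; hence $\lambda_{max}(\Q^\T\Q) = \lambda_{max}[(\R^\T\R)^{-1}] = 1/\lambda_{min}(\R^\T\R)$. For the vector part, the regularity condition~(\ref{hhh1}) gives $|h(\x_i^*)| \leq \alpha\|\x_i^*\|$ for each retained point, so squaring and summing yields
\[
\|\h\|^2=\sum_{i=1}^r h(\x_i^*)^2\leq \alpha^2\sum_{i=1}^r\|\x_i^*\|^2=\alpha^2\,\tr(\R^\T\R),
\]
the last identity being the expression of the trace as the sum of squared row norms of $\R$. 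Multiplying the two bounds gives $\h^\T\Q^\T\Q\h \leq \alpha^2\tr(\R^\T\R)/\lambda_{min}(\R^\T\R)$, and adding back the common variance term delivers~(\ref{lem_MSE}).

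I do not anticipate a serious obstacle: the argument is a two-factor Rayleigh-quotient estimate combined with the entrywise envelope on $h$. The one point that demands care is the spectral identity $\lambda_{max}(\Q^\T\Q)=1/\lambda_{min}(\R^\T\R)$, which should be derived through the nonzero-eigenvalue correspondence of $\R(\R^\T\R)^{-2}\R^\T$ and $(\R^\T\R)^{-1}$. A naive submultiplicative bound such as $\|(\R^\T\R)^{-1}\R^\T\h\|^2\leq\|(\R^\T\R)^{-1}\|_2^2\,\|\R^\T\h\|^2$ would instead produce the looser factor $\lambda_{max}(\R^\T\R)/\lambda_{min}(\R^\T\R)^2$, so routing the estimate through the exact spectrum is what yields the clean constant in~(\ref{lem_MSE}).
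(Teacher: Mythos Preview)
Your proposal is correct and follows essentially the same route as the paper: Rayleigh-quotient bound $\h^\T\Q^\T\Q\h\le\lambda_{max}(\Q^\T\Q)\|\h\|^2$, the spectral identity $\lambda_{max}(\Q^\T\Q)=\lambda_{max}[(\R^\T\R)^{-1}]=1/\lambda_{min}(\R^\T\R)$, and the envelope $\|\h\|^2\le\alpha^2\tr(\R^\T\R)$ from~(\ref{hhh1}). Your justification of the spectral identity via the nonzero-eigenvalue correspondence is in fact more explicit than the paper's, which simply asserts $\lambda_{max}(\Q^\T\Q)=\lambda_{max}((\R^\T\R)^{-1})$.
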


Two conclusions can be made from Lemma \ref{lem1}. 
First, the worst-case MSE of an SLS estimator can be inflated to arbitrarily large values by a very small value of $\lambda_{min}(\mathbf{X}^{*\T}\mathbf{X}^*)$.
It is thus very easy to construct a ``worst-case" sample and a misspecification term for which an SLS estimator will have unacceptable performance.
Second, $\tBR$ is the most robust SLS estimator if the selected subsample minimizes the worst-case MSE.
Such a subsample, however, is impossible to obtain in real practice, since both values of $\sigma^2$ and $\alpha^2$ are unknown to practitioners.

%In many large sample datasets, some outliers in the response often exist \citep{kriegel2008angle,zimek2012survey}. 
%The existence of these extreme outliers indicates the value of $\alpha^2$ can be considerable in practice.
%We illustrate such a phenomenon using a toy example in Figure~\ref{plot1}, where the dots represent the observations of a one-dimensional predictor variable. 
%The solid line and the dashed line represent the true regression function and the full-sample regression line, respectively.  
%We attribute the extreme outliers in response to the misspecification term in the model, and we observe that the existence of the misspecification term will result in a large value of $\alpha^2$, leading to a biased estimator.

%\begin{figure}[ht]
%\begin{center}
%\begin{tabular}{ccc}
%\includegraphics[scale = .8]{}
%\end{tabular}
%  \caption{The data (black dots) are generated from a correctly-specified linear model in the left panel and a misspecified linear model in the right panel. The existence of the misspecification term will result in extreme outliers in the response.}\label{plot1}
%  \end{center}
%\end{figure}

In this paper, we are more interested in the setting where the misspecified term $h(\x)$ is large enough. In particular, the value of $\alpha^2$ is large enough such that, on the right-hand side of the Inequality (\ref{lem_MSE}), the second term dominates the first term. 
Under this setting, the desired subsample $\R$ should yield a relatively small value of $\tr(\mathbf{X}^{*\T}\mathbf{X}^*)/\lambda_{min}(\mathbf{X}^{*\T}\mathbf{X}^*)$.
Notice that 
\begin{eqnarray}\label{eqn_con}
\tr(\mathbf{X}^{*\T}\mathbf{X}^*)/\lambda_{min}(\mathbf{X}^{*\T}\mathbf{X}^*)\geq p,
\end{eqnarray}
where the equality holds when the condition number of the subsample information matrix, i.e., $\kappa(\mathbf{X}^{*\T}\mathbf{X}^*)\stackrel{\text{def}}{=}\lambda_{max}(\mathbf{X}^{*\T}\mathbf{X}^*)/\lambda_{min}(\mathbf{X}^{*\T}\mathbf{X}^*)$, takes the minimum value $1$. 
Inequality (\ref{eqn_con}) thus suggests the desired subsample $\R$ is the one with a relatively small value of $\kappa(\mathbf{X}^{*\T}\mathbf{X}^*)$. %has a relatively small value.

We now give another intuition about how $\kappa(\mathbf{X}^{*\T}\mathbf{X}^*)$ is related to the robustness of the SLS estimator.
%In the literature, the condition number of the full-sample information matrix $\kappa(\X^\T\X)$ is known to be related to the robustness of the full-sample OLS estimator $\hB_{ols}$.
\citet{casella1985condition} showed that
\begin{eqnarray*}
\frac{||\delta\hB_{ols}||}{||\hB_{ols}||}=\frac{||\delta{(\X^\T\X)^{-1}\X^\T\Y}||}{||(\X^\T\X)^{-1}\X^\T\Y||}\leq\kappa(\X^\T\X)\frac{||\delta{\X^\T\Y}||}{||\X^\T\Y||},
\end{eqnarray*}
where $\delta\hB_{ols}$ and $\delta{\X^\T\Y}$ are perturbations of $\hB_{ols}$ and ${\X^\T\Y}$ respectively. 
%Thus, the estimator is more stable if $\kappa(\X^\T\X)$ has a small value.
Analogously, one can also show that
\begin{eqnarray}\label{eqn7}
\frac{||\delta\tBR||}{||\tBR||}\leq\kappa(\mathbf{X}^{*\T}\mathbf{X}^*)\frac{||\delta{\mathbf{X}^{*\T}\Y^*}||}{||\mathbf{X}^{*\T}\Y^*||}.
\end{eqnarray}
Inequality (\ref{eqn7}) thus suggests that a smaller value of $\kappa(\mathbf{X}^{*\T}\mathbf{X}^*)$ associates with a more robust estimator $\tBR$.

It is worth noting that, if the subsample matrix $\R$ minimizes the worst-case MSE, it does not necessarily minimize $\kappa(\mathbf{X}^{*\T}\mathbf{X}^*)$ simultaneously since both the value of $\sigma^2$ and $\alpha^2$ are not available in practice.
A robust subsample $\R$ should at least yield a relatively small value of $\kappa(\mathbf{X}^{*\T}\mathbf{X}^*)$ and balance the trade-off between the bias and the variance in the Equation (\ref{smse}).
Following this line of thinking, we propose a novel subsampling algorithm,  the details of which are presented in the next section.

%%%%%%%%%%%%%%%%%%%%%%%%%%%%%%%%%%%%%%%%%%%%%%%%%%%%%%%%%%%%%%%%%%%%%%%%%%%%%%%%%%%%%%%%%%
%                                                                                        %
%                                                                                        %
%                                                                                        %
%                                                                                        %
%                                                                                        %
%                                                                                        %
%                                     LowCon Algorithm                                  %
%                                                                                        %
%                                                                                        %
%                                                                                        %
%                                                                                        %
%                                                                                        %
%                                                                                        %
%%%%%%%%%%%%%%%%%%%%%%%%%%%%%%%%%%%%%%%%%%%%%%%%%%%%%%%%%%%%%%%%%%%%%%%%%%%%%%%%%%%%%%%%%%

\section{LowCon Algorithm}\label{sec:The}
In this section, we present our main algorithm, called ``Low condition number pursuit" or ``LowCon."
In Section 3.1, we introduce the notion of orthogonal Latin hypercube designs (OLHD) and how these can be used to generate a design matrix $\LL$ such that $\kappa(\LL^\T\LL)$ has a relatively small value.
In Section 3.2, we present the detail of the proposed algorithm, which incorporates the idea of OLHD.
In Section 3.3,  we present the theoretical property of the proposed SLS estimator, which is obtained by the LowCon algorithm.
We show that the proposed estimator has a relatively small upper bound of the MSE.

\subsection{Orthogonal Latin Hypercube Design}\label{subsec:space}
Taking a subsample with some specific characteristics has many similarities to the design of experiments, which aims to place design points in a continuous design space, so that resulting design points have certain properties \citep{wu2011experiments}. The theory and methods in the design of experiments are potentially useful for solving subsampling problems. The fundamental difference between the design of experiments and the subsampling is that, in subsampling, the selected points cannot be freely designed in a continuous space as the design of experiments but must be taken from the given finite sample $\{\bm{x}_i\}_{i=1}^n$. To borrow the strength of the design of experiments, we focus on space-filling designs, which aims to place the design points that cover a continuous design space as uniformly as possible \citep{fang2005design,kleijnen2008design,joseph2016space,meng2020more,wang2020lhd}.
In other words, for any point in the experimental region, space-filling designs have a design point close to it.
%The techniques have been widely applied in various areas \citep{kleijnen2008design,fang2005design,meng2020more}. 
%These techniques focus on the problem of how to draw the design points that cover a continuous design space as uniformly as possible.
%Note that one fundamental difference between such a problem and the subsampling problem, however, is that in the latter, the selected points cannot be freely designed in a continuous space, and they must come from the given finite sample pool $\{\bm{x}_i\}_{i=1}^n$.
We thus propose to round the design point to its nearest neighbor in the sample. 
Details are provided in Section 3.2.

We now introduce a specific space-filling design that is of interest, the Latin hypercube design (LHD) \citep{stein1987large, mckay2000comparison, wang2020lhd2}. 
%and its definition is shown in the following.
%The Latin hypercube design is known for the best one-dimensional marginal projection property 
\begin{definition}[Latin hypercube design]
Given the design space $\mathcal{X}=[-1,1]^p$, $\LL \in \mathbb{R}^{r \times p}$
is called a Latin hypercube design matrix if each column of $\LL$ is a random permutation of $\{\frac{1-r}{r},\frac{3-r}{r},\ldots,\frac{r-1}{r}\}$ \citep{steinberg2006construction}.
\end{definition}

Intuitively, if one divides the design space $[-1,1]^p$ into $r$ equally-sized slices in the $j$th ($j=1, \ldots, p$) dimension, a Latin hypercube design ensures that there is exactly one design point in each slice. 
The left panel of Figure \ref{LHD} shows an example of a set of Latin hypercube design points (black dots).
Although uniformly distributed on the marginal, the Latin hypercube design points do not necessarily spread out in the whole design space.
That is to say, a set of LHD points may not be ``space-filling" enough.
To improve the ``space-filling" property of LHD, various methods have been developed \citep{tang1993orthogonal,park1994optimal,fang2002centered,joseph2008orthogonal}. 
Of particular interest in this paper is the orthogonal Latin hypercube design (OLHD), which achieves the goal by reducing the pairwise correlations of LHD \citep{ye1998orthogonal}; see the right panel of Figure \ref{LHD} for an example. 
%Most recently, \citep{joseph2015maximum} further developed the idea of LHD and introduced maximum projection design that maximizes the space-filling properties on projections to all low-dimension subspace of the original design space. Another branch of space-filling design is related to the ``discrepancy'', a measure of how close is the empirical distribution of a design to the uniform distribution on the design space. \citep{niederreiter1978quasi} generated so-called ``low discrepancy'' sequences for quasi-Monte Carlo Methods. As a sequential method, this idea became popular in computer experiments \citep{fang2005uniform}. \citet{fang1980uniform} introduced uniform design, which has the minimum ``discrepancy'' among all possible designs. Uniform designs also show merits in regression design \citep{yue1999robust,hickernell2002uniform}. We refer the readers to \citep{fang2005uniform} and \citep{pronzato2012design} for further discussions of these methods.

\begin{figure}[ht]
\begin{center}
\begin{tabular}{cc}
\includegraphics[scale = .6]{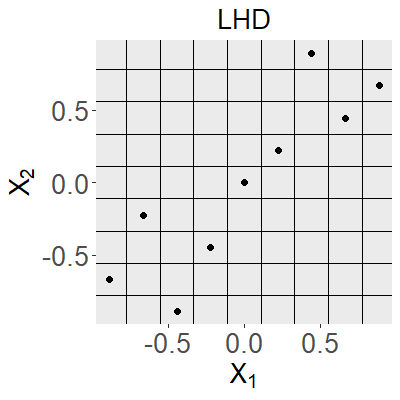}   & \includegraphics[scale = .6]{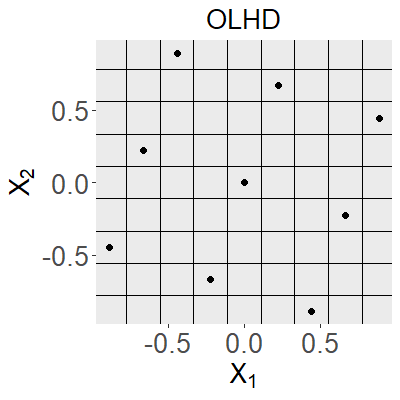}\\
\end{tabular}
  \caption{Example of LHD (left panel) and OLHD (right panel) with nine design points in $[-1,1]^2$. The design points are marked as black dots. As a special case of LHD, OLHD has relatively low pairwise correlation.}\label{LHD}
  \end{center}
\end{figure}

Consider the information matrix $\LL^\T\LL$, where $\LL$ is an OLHD matrix.
Intuitively, the matrix $\LL^\T\LL$ has a relatively small condition number, since all of the diagonal elements of $\LL^\T\LL$ are the same and all of the off-diagonal elements of $\LL^\T\LL$ have relatively small absolute value.
Although there is a lack of theoretical guarantee, empirically, it is known that $\kappa(\LL^\T\LL)$ is in general no greater than 1.13 \citep{cioppa2007efficient}.
Such a fact motivates us to select the subsample that approximates a set of orthogonal Latin hypercube design points.

\subsection{LowCon Subsampling Algorithm}
Without loss of generality, we assume the data points $\{\bm{x}_i\}_{i=1}^n$ are first scaled to $[-1,1]^p$.
The proposed algorithm works as follows.
We first generate a set of orthogonal Latin hypercube design points from a design space $\mathcal{X} \subseteq [-1,1]^p$.
We then search and select the nearest neighbor from the sample for every design point.

%The key to success is that the selected subsample can well-represent the set of design points to the design point is close-enough to its nearest neighbor.
The key to success is that the selected subsample can well-represent the set of design points, i.e., each selected subsample point is close-enough to its nearest design point, respectively.
We provide more discussion in Section~3.3 about when such a requirement is met in practice.
Empirically, we find $[-1,1]^p$ may not be a good choice for the design space $\mathcal{X}$.
This is because, in such a scenario, the design points, which are close to the boundary of $[-1,1]^p$, may be too far away from their nearest neighbors, especially when the population density function has a heavy tail.
As a result, a design space that is slightly smaller than $[-1,1]^p$ would be a safer choice. 
We opt to set the design space as $\mathcal{X}_\theta=[\theta_{j1},\theta_{j2}]^p$, where $\theta_{j1}$ and $\theta_{j2}$ are the $\theta$-percentile and $(100-\theta)$-percentile of the $j$th column of the scaled data points, respectively. 
% Through all the experiments in this paper, we set $\theta=1$.
% More simulation results with other values of $\theta$ can be found in the Supplementary Material.
The algorithm is summarized below.

\begin{algorithm}
  \caption{``Low Condition Number Pursuit (LowCon)'' subsampling algorithm}
  \begin{algorithmic}
    \State
    \begin{enumerate}
\item {\textbf{Data normalization:}}
The data points $\{\bm{x}_i\}_{i=1}
^n$ are first scaled to $[-1,1]^p$.
 \item {\textbf{Generate OLHD points:}}
Given the parameter $\theta$ and the design space $\mathcal{X}_\theta \subseteq [-1,1]^p$, generate a set of orthogonal Latin hypercube design points $\{\s_i\}_{i=1}^r$.
\item {\textbf{Nearest neighbor search:}} Select the nearest neighbor for each design point $\s_i$ from $\{\bm{x}_i\}_{i=1}^n$, denoted by $\s^*_i$. The selected subsample is thus given by $\{\s^*_i\}_{i=1}^r$.
\end{enumerate}
  \end{algorithmic}
\end{algorithm}

Figure~\ref{Fig2} illustrates LowCon algorithm. 
The synthetic data points in the left panel were generated from a bivariate normal distribution and are scaled to $[-1,1]^2$. 
A set of orthogonal Latin hypercube design points are then generated, labeled as black triangles in the middle panel. 
For each design point, the nearest data point is selected, marked as black dots in the right panel.
The selected points can well-approximate the design points.

\begin{figure}[ht]
\begin{center}
\includegraphics[scale = .65]{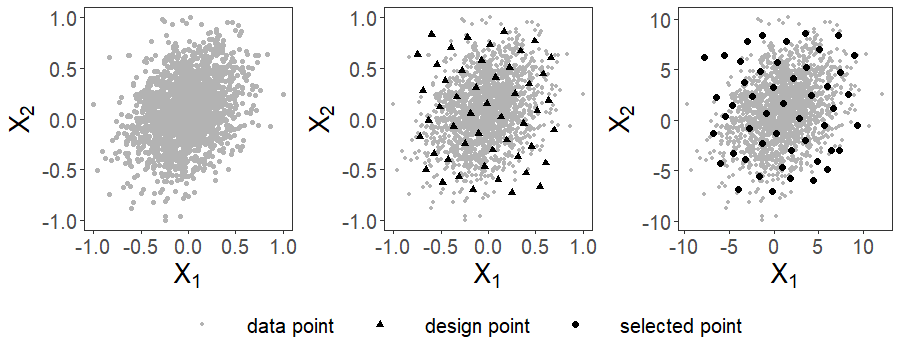}
  \caption{Illustration for Algorithm 1. The data points (gray dots) are first scaled to $[-1,1]^p$, shown in the left panel. A set of OLHD points (black triangles) are generated from $\mathcal{X}_\theta=[-0.8,0.8]^2$, shown in the middle panel. In the left panel, the nearest neighbor for each design point is selected (black dots). }\label{Fig2}
  \end{center}
\end{figure}

Note that the set of design points generated by the orthogonal Latin hypercube design technique is not unique; different sets of design points may result in different subsamples.
Algorithm 1 thus is a random subsampling method instead of a deterministic subsampling method.
In practice, the set of design points $\{\s_i\}_{i=1}^r$ in Algorithm 1 can be randomly generated.

\subsection{Theoretical Results}
We now present the theoretical property of the proposed subsample least squares estimator, obtained by the LowCon algorithm.
Some notations are needed before we show our main theorem. 
Recall that $\LL$ represents an orthogonal Latin hypercube design matrix.
Let $\R_L$ be the subsample matrix obtained by the proposed algorithm. 
One thus can decompose $\R_L$ into a sum of the design matrix $\LL$ and a matrix $\mathbf{D}=(\bm{d}_1,\cdots,\bm{d}_r)^\T$, i.e., $\R_L= \LL+\mathbf{D}$.

Following the notations in Algorithm 1, one can write $\LL=(\s_1,\ldots,\s_r)^\T$ and $\R_L=(\s^*_1,\ldots,\s^*_r)^\T$, where $\s_i$ and $\s^*_i$ represent the $i$th design point and its corresponding nearest neighbor from the sample, respectively.
One thus has $\bm{d}_i=\s^*_i-\s_i$, for $i=1,\ldots,r$.
Intuitively, $\mathbf{D}$ is a random perturbation matrix, and the selected data points can well-approximate the design points if $\mathbf{D}$ is ``negligible".
In such a case, $MSE(\widetilde{\bm{\beta}}_{\R_L})$, which is a function of $\R_L$, can be expanded around $MSE(\widetilde{\bm{\beta}}_{\LL})$ through Taylor expansion.  
From this, we can establish our main theorem below; the proof is relegated to the appendix.

\begin{theorem}\label{mainthm}
Suppose the data follow the model (\ref{mmm}) and the regularity condition (\ref{hhh1}) is satisfied.
Assume $s_p(\LL)>s_1(\mathbf{D})$, where $s_1(\cdot)$ and $s_p(\cdot)$ represent the largest and the smallest singular value of a matrix of $p$ columns, respectively.
A Taylor expansion of $MSE(\widetilde{\bm{\beta}}_{\R_L})$ around the point $\R_L=\LL$ yields the following upper bound,
\begin{align}\label{eqn_9}
MSE(\widetilde{\bm{\beta}}_{\R_L})&\leq \sigma^2p^2\frac{\kappa(\LL^\T\LL)}{tr(\LL^\T\LL)}+\alpha^2p\kappa(\LL^\T\LL)+W.
\end{align}
Here, $W=O(s_1(\mathbf{D}))$ is the Taylor expansion remainder.
\end{theorem}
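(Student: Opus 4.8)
The plan is to start from the worst-case MSE bound of Lemma~\ref{lem1}, specialized to the subsample matrix $\R_L = \LL + \mathbf{D}$, and then to Taylor-expand the resulting expression in the perturbation $\mathbf{D}$ about the point $\mathbf{D} = \mathbf{0}$. Applying Lemma~\ref{lem1} with $\mathbf{X}^* = \R_L$ gives
\begin{align*}
\MSE(\widetilde{\bm{\beta}}_{\R_L}) \leq \sigma^2\tr[(\R_L^\T\R_L)^{-1}] + \alpha^2\frac{\tr(\R_L^\T\R_L)}{\lambda_{min}(\R_L^\T\R_L)}.
\end{align*}
I would view the right-hand side as a function of the entries of $\R_L$ and expand it around $\R_L = \LL$. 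The zeroth-order term is obtained by setting $\mathbf{D} = \mathbf{0}$, while every first- and higher-order contribution will be collected into the remainder $W$.

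First I would bound the zeroth-order term, i.e.\ the worst-case MSE evaluated at $\R_L = \LL$, purely through eigenvalue inequalities. Writing $\lambda_{max} = \lambda_1 \geq \cdots \geq \lambda_p = \lambda_{min} > 0$ for the eigenvalues of $\LL^\T\LL$, the bound $\tr[(\LL^\T\LL)^{-1}] = \sum_{i=1}^p \lambda_i^{-1} \leq p/\lambda_{min}$ together with $\tr(\LL^\T\LL) \leq p\,\lambda_{max}$ and $\lambda_{min} = \lambda_{max}/\kappa(\LL^\T\LL)$ yields $\lambda_{min} \geq \tr(\LL^\T\LL)/(p\,\kappa(\LL^\T\LL))$, so that
\begin{align*}
\sigma^2\tr[(\LL^\T\LL)^{-1}] \leq \sigma^2\frac{p}{\lambda_{min}} \leq \sigma^2 p^2\frac{\kappa(\LL^\T\LL)}{\tr(\LL^\T\LL)}.
\end{align*}
The same two inequalities give $\tr(\LL^\T\LL)/\lambda_{min} \leq p\,\lambda_{max}/\lambda_{min} = p\,\kappa(\LL^\T\LL)$, which furnishes the second main term $\alpha^2 p\,\kappa(\LL^\T\LL)$. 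These two estimates reproduce the first two terms on the right-hand side of (\ref{eqn_9}).

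It then remains to show that the remainder of the expansion is $O(s_1(\mathbf{D}))$. Since $\R_L^\T\R_L = \LL^\T\LL + (\LL^\T\mathbf{D} + \mathbf{D}^\T\LL) + \mathbf{D}^\T\mathbf{D}$, the perturbation of $\tr(\R_L^\T\R_L)$ equals $2\tr(\LL^\T\mathbf{D}) + \tr(\mathbf{D}^\T\mathbf{D})$, which is $O(s_1(\mathbf{D}))$ because $\LL$ is fixed. For the inverse trace I would use the identity $\frac{d}{dt}\tr(A + t\Delta)^{-1}\big|_{t=0} = -\tr(A^{-1}\Delta A^{-1})$ with $A = \LL^\T\LL$, so that the leading change is linear in the entries of $\mathbf{D}$ and hence again $O(s_1(\mathbf{D}))$. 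The delicate ingredient is the smallest eigenvalue $\lambda_{min}(\R_L^\T\R_L) = s_p(\R_L)^2$: Weyl's singular-value inequality gives $|s_p(\R_L) - s_p(\LL)| \leq s_1(\mathbf{D})$, and the hypothesis $s_p(\LL) > s_1(\mathbf{D})$ guarantees $s_p(\R_L) > 0$, so $\R_L^\T\R_L$ stays positive definite and the expansion is well defined, with $\lambda_{min}(\R_L^\T\R_L)$ shifted by at most $O(s_1(\mathbf{D}))$. Collecting all of these contributions into $W = O(s_1(\mathbf{D}))$ would complete the argument.

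I expect the main obstacle to be the perturbation analysis of $\lambda_{min}(\R_L^\T\R_L)$: the map $A \mapsto \lambda_{min}(A)$ fails to be differentiable when the smallest eigenvalue is not simple, so some care is needed to justify the Taylor expansion of the second term and to control its remainder uniformly. The assumption $s_p(\LL) > s_1(\mathbf{D})$ is precisely what keeps $\lambda_{min}(\R_L^\T\R_L)$ bounded away from zero, preventing the denominator (and hence the entire worst-case MSE) from blowing up, and is therefore the essential hypothesis enabling the $O(s_1(\mathbf{D}))$ control of $W$.
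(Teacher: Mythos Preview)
Your proposal is correct and uses the same ingredients as the paper: Lemma~\ref{lem1} specialized to $\R_L$, the eigenvalue bounds $\tr[(\LL^\T\LL)^{-1}]\leq p/\lambda_{min}$ and $\tr(\LL^\T\LL)/\lambda_{min}\leq p\,\kappa(\LL^\T\LL)$, Weyl's inequality for the singular-value perturbation, and the hypothesis $s_p(\LL)>s_1(\mathbf{D})$ to keep $\R_L^\T\R_L$ invertible.

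The only organizational difference is that the paper avoids the obstacle you yourself flag---the non-differentiability of $\lambda_{min}$ in the matrix entries---by reversing the order of Weyl and Taylor. Instead of Taylor-expanding the worst-case MSE in the entries of $\mathbf{D}$, the paper first applies Weyl to obtain the scalar bounds
\[
\kappa(\R_L^\T\R_L)\leq\Bigl(\tfrac{s_1(\LL)+s_1(\mathbf{D})}{\,s_p(\LL)-s_1(\mathbf{D})\,}\Bigr)^{2},\qquad
\tr[(\R_L^\T\R_L)^{-1}]\leq\frac{p}{\bigl(s_p(\LL)-s_1(\mathbf{D})\bigr)^{2}},
\]
and only then Taylor-expands these smooth one-variable functions of $s_1(\mathbf{D})$ around $0$. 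This sidesteps any question of eigenvalue multiplicity and makes the $O(s_1(\mathbf{D}))$ remainder immediate. Your route reaches the same destination, but the paper's ordering is technically cleaner precisely where you anticipated trouble.
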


When the Taylor expansion in Theorem~\ref{mainthm} is valid, three significant conclusions can be made.
First, the theorem indicates that the MSE of the proposed estimator is finite.
Specifically, following the Definition~3.1, we have
$$tr(\LL^\intercal\LL) = \left((\frac{1-r}{r})^2+(\frac{3-r}{r})^2+\ldots+(\frac{r-1}{r})^2\right)\times p.$$
Moreover, the value of $\kappa(\LL^\intercal\LL)$ is in general no greater than 1.13, as discussed in Section~3.1.
Combining these two facts yields an informal but finite upper bound for $MSE(\widetilde{\bm{\beta}}_{\R_L})$, i.e.,
$$MSE(\widetilde{\bm{\beta}}_{\R_L})\leq \sigma^2p^2\frac{1.13}{tr(\LL^\intercal\LL)}+1.13\alpha^2p+W.$$
Recall that Lemma~\ref{lem1} shows that the worst-case MSE of an SLS estimator can be inflated to an arbitrarily large value by a very small value of $\lambda_{min}(\mathbf{X}^{*\T}\mathbf{X}^*)$.
The fact that the proposed estimator has a finite MSE thus indicates the proposed estimator is robust.%, i.e., the value of which will not be inflated to arbitrary large.

Second, the upper bound of the squared bias of the proposed estimator, which equals $\alpha^2p\kappa(\LL^\T\LL)$, is very close to the minimum value of the worst-case squared bias.
This is because the worst-case squared bias has the minimum value of $\alpha^2p$, and the value of $\kappa(\LL^\T\LL)$ is close to 1.
%Combining these two facts together yields the second conclusion.
Consider the common situation when the value of $\alpha^2$ is large enough such that, in Inequality~(\ref{lem_MSE}), the bias term dominates the variance term.
Under such a situation, the second conclusion thus indicates the proposed estimator is very close to the ``most robust" estimator, which minimizes the worst-case squared bias. 

Third, the proposed estimator has a finite variance.
%This is because the value of $tr(\LL^\T\LL)$ is finite when the design space of $\LL$ equals $[-1,1]^p$, a direct conclusion from the definition of the Latin hypercube design.
Recall that in Algorithm 1, sometimes we may choose a design space $\mathcal{X}_\theta\subset [-1,1]^p$.
The value of $tr(\LL^\T\LL)$ will decrease in such cases, compared to the case when the design space equals $[-1,1]^p$.
The variance of the proposed estimator thus will increase in such cases.
Nevertheless, the variance term will not be inflated to be arbitrarily large, as long as the design space is not too small.
More discussion on the impact of the design space to the Inequality~(\ref{eqn_9}) is relegated to the Supplementary Material.

There are two essential assumptions in Theorem~\ref{mainthm}.
One is that $s_p(\LL)>s_1(\mathbf{D})$, and the other is that the Taylor expansion is valid, i.e., when $s_1(\mathbf{D})$ is ``small".
Although we will evaluate the quality of the proposed estimator empirically in the next section, a precise theoretical characterization of when these two assumptions are valid is currently not available.
Here, we simply give an example such that $s_1(\mathbf{D})$ converges to zero as $n$ goes to infinity, in which case the desired Taylor expansion is apparently valid.
The assumption $s_p(\LL)>s_1(\mathbf{D})$ is also satisfied in such a case, as $n$ goes to infinity, since the value of $s_p(\LL)$ is not relevant to $n$.
Consider the case when the non-zero support of the population distribution is $[-1,1]^p$, i.e., the sample and the design points have the same domain.
In such a case, the distance between each design point and its nearest neighbor converges to zero, as $n$ goes to infinity.
As a result, each entry of the matrix $\mathbf{D}$ converges to zero, and thus $s_1(\mathbf{D})$ converges to zero as well, as $n$ goes to infinity.
Consequently, the desired Taylor expansion is valid in such a case.

%%%%%%%%%%%%%%%%%%%%%%%%%%%%%%%%%%%%%%%%%%%%%%%%%%%%%%%%%%%%%%%%%%%%%%%%%%%%%%%%%%%%%%%%%%
%                                                                                        %
%                                                                                        %
%                                                                                        %
%                                                                                        %
%                                                                                        %
%                                                                                        %
%                                     Section 5                                           %
%                                                                                        %
%                                                                                        %
%                                                                                        %
%                                                                                        %
%                                                                                        %
%                                                                                        %
%%%%%%%%%%%%%%%%%%%%%%%%%%%%%%%%%%%%%%%%%%%%%%%%%%%%%%%%%%%%%%%%%%%%%%%%%%%%%%%%%%%%%%%%%%

\section{Simulation Results}\label{sec4:simul}
To show the effectiveness of the proposed LowCon algorithm in misspecified linear models, we compare it with the existing subsampling methods in terms of MSE.
The subsampling methods considered here are uniform subsampling (UNIF), basic leverage subsampling (BLEV), shrinkage leverage subsampling (SLEV), unweighted-leverage subsampling (LEVUNW) \citep{ma2015statistical,ma2015leveraging}, and information-based optimal subset selection (IBOSS) \citep{wang2018information}. 
The shrinkage parameter for SLEV is set as 0.9, as suggested in \citet{ma2015statistical}.
%The parameter $\theta$ in the LowCon algorithm is set to be 1.
Through all the experiments in this paper, we set $\theta=1$.
More simulation results with other values of $\theta$ can be found in the Supplementary Material.

%We generate OLHD from R package ``lhs" \citep{carnell2009lhs} and implement $k$-d tree algorithm by R package ``FNN" \citep{li2013fnn}.

We simulate the data from the model (\ref{mmm}) with $n=10^4$, $p=\{10,20\}$ and $r=\{2p,4p,\ldots,10p\}$. 
Three different distributions are used to generate the $\X$ matrix,

{\footnotesize\textbf{D1}}.  $N(\mathbf{1},\boldsymbol{\Sigma})$; 

{\footnotesize\textbf{D2}}.  $0.5 N(\textbf{0}, 2\boldsymbol{\Sigma}) + 0.5 N(\textbf{1}, \boldsymbol{\Sigma})$; 

{\footnotesize\textbf{D3}}.   $t_{10}(\bf{1},\boldsymbol{\Sigma})$,

\noindent where the $(i, j)$th element of $\boldsymbol{\Sigma}$ is set to be $10\times0.6^{|i-j|}$ for $i,j=1,\ldots,p$.
For the coefficient $\bm{\beta}_0$, the first 20\% and the last 20\% entries are set to be 1, and the rest of them are set to be 0.1. 
To show the robustness of the proposed estimator under various misspecification terms, we 
consider five different $h$'s,

{\footnotesize\textbf{H1}}. $h(\x_i)= 0$;

{\footnotesize\textbf{H2}}. $h(\x_i)= 10\sin(x_{i3})$;

{\footnotesize\textbf{H3}}. $h(\x_i)= c_1\cdot x_{i3} x_{i8}$;

{\footnotesize\textbf{H4}}. $h(\x_i)= c_2\cdot x_{i3} \sin(x_{i8})$;

{\footnotesize\textbf{H5}}. $h(\x_i)= c_3\cdot x_{i3}^2$,

\noindent where the constants $c_1$, $c_2$ and $c_3$  are selected so that $\max_{\x\in\{\x_i\}_{i=1}^n}(\{|h(\x)|\}_{i=1}^n)=10$, i.e.,  the response is not dominated by the misspecification term. % we select the constants $c_1$, $c_2$ and $c_3$ such that $\max_{\x\in\{\x_i\}_{i=1}^n}(\{|h(\x)|\}_{i=1}^n)=10$, respectively.
Note that \textbf{H1} does not have any misspecified terms.
Figure~\ref{Fig3} shows the heatmap of the misspecified terms from \textbf{H2} to \textbf{H5}, where $\X$ matrix is generated from \textbf{D1}. 
Only the third and eighth predictors are shown.

\begin{figure}[ht]
\begin{center}
\begin{tabular}{cc}
\includegraphics[scale = .6]{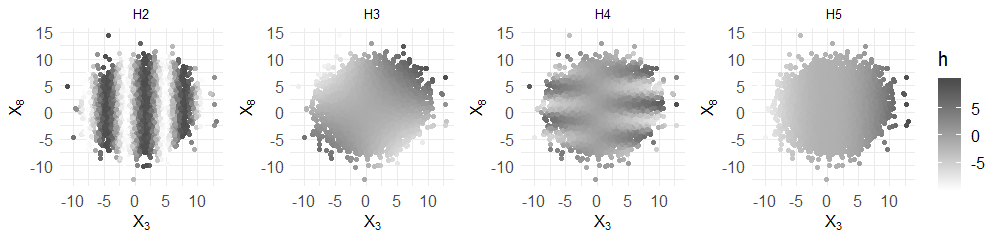}
\end{tabular}
  \caption{The heatmap of ten thousand data points generated from distribution
$\textbf{D1}$ with ten predictors. Only the $3$rd and $8$th predictors are shown. The color demonstrates the values of different model misspecification terms, from $\textbf{H2}$ to $\textbf{H5}$.
}\label{Fig3}
  \end{center}
\end{figure}

We illustrate the subsamples selected by different subsampling methods in Figure~\ref{Fig4}. 
The LEVUNW method is omitted here since the subsample identified by LEVUNW is the same as the subsample identified by BLEV.
The data points (gray dots) are generated from distribution \textbf{D3} with $n=10^4$ and $p=10$, where only the third and the eighth predictors are shown. 
In each panel, a subsample of size 40 is selected (black dots). 
Figure~\ref{Fig4} reveals some interesting facts. 
We first observe the subsamples selected by BLEV and SLEV are more dispersed than the subsample selected by UNIF. 
Such an observation can be attributed to the fact that BLEV and SLEV give more weight to the high-leverage-score data points. 
For the IBOSS method, the selected subsample includes all the ``extreme'' data points from all predictors. 
Such a subsample is most informative when the linear model assumption is valid.
Finally, we observe that the subsample chosen by the proposed LowCon algorithm is most ``uniformly distributed" among all.
Intuitively, such a pattern indicates the selected subsample yields an information matrix that has a relatively small condition number.

\begin{figure}[ht]
\begin{center}
\includegraphics[scale = .41]{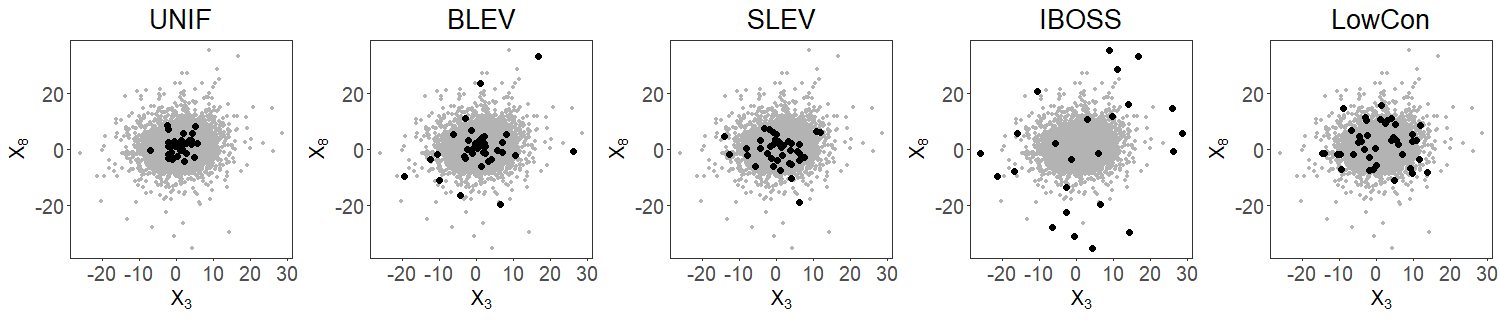}
  \caption{An illustration of five subsamples identified by different subsampling methods. The samples are marked in gray and the selected subsamples are marked in black.}\label{Fig4}
  \end{center}
\end{figure}

To compare the performance for different SLS estimators, we calculate the MSE for each of the SLS estimators based on 100 replicates,
$\mbox{MSE}=\sum_{i=1}^{100}||\bm{\widehat{\B}}^{(i)}-\bm{\beta}_0||^2$/100, where $\bm{\widehat{\B}}^{(i)}$ represents the SLS estimator in the $i$th replication.
Figure~\ref{Fig5} and Figure \ref{Fig6} show the log(MSE) versus different subsample size under various settings, when $p=10$ and $20$, respectively.
In both figures, each row represents a particular data distribution \textbf{D1$-$D3} and each column represents a particular misspecification term {\textbf{H1$-$H5}}. 

\begin{figure}[ht]
  \centering
      \includegraphics[width=16cm]{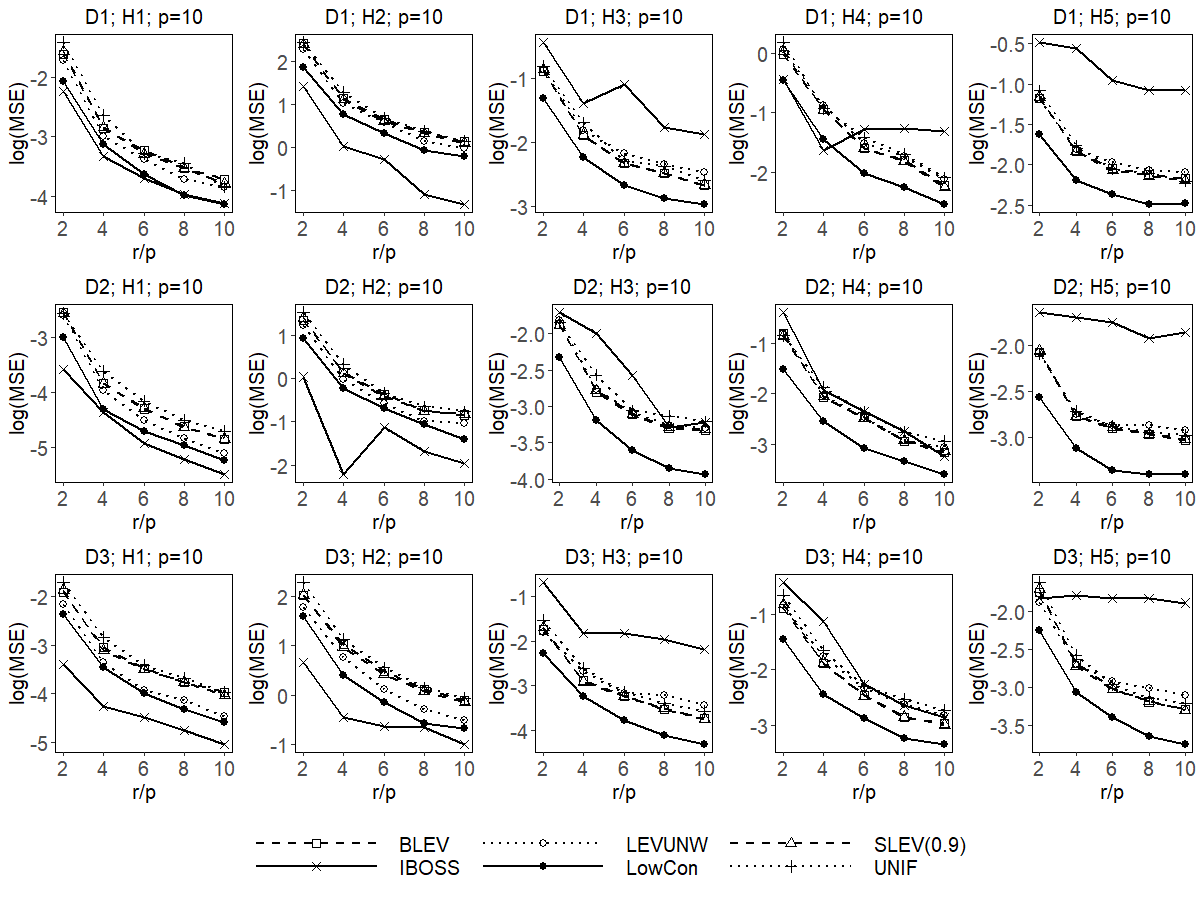}
 \caption{Comparison of different estimators when $p=10$. Each row represents a different data distribution ({\footnotesize\textbf{D1$-$3}}) and each column represents a different misspecification term ({\footnotesize\textbf{H1--5}}). }\label{Fig5}
\end{figure}

In Figures~\ref{Fig5} and \ref{Fig6}, we first observe that UNIF, as expected, does not perform well.
As two random subsampling methods, BLEV and SLEV perform similarly, and both perform better than UNIF in most of the cases.
Such a phenomenon is attributed to the fact that both methods tend to select the data points with high leverage-scores, and these points are more informative for estimating the coefficient, compared to randomly selected points.

Next, we find both LEVUNW and IBOSS have decent performance when the misspecification term equals zero (the leftmost column).
Their performance, however, is inconsistent when the non-zero misspecification term exists, i.e., they perform well in some cases and perform poorly on others.
Note that these two methods, occasionally, are even inferior to the UNIF method.
Such an observation indicates that these two methods are effective when the linear model assumption is correct, but are not robust when the model is misspecified.
We attribute this observation to the fact that the most informative data points derived under the postulated model do not necessarily lead to a decent estimator when the postulated model is incorrect. In fact, the selected subsample can even be misleading and may dramatically deteriorate the performance of the subsample estimator.

\begin{figure}[ht]
  \centering
      \includegraphics[width=16cm]{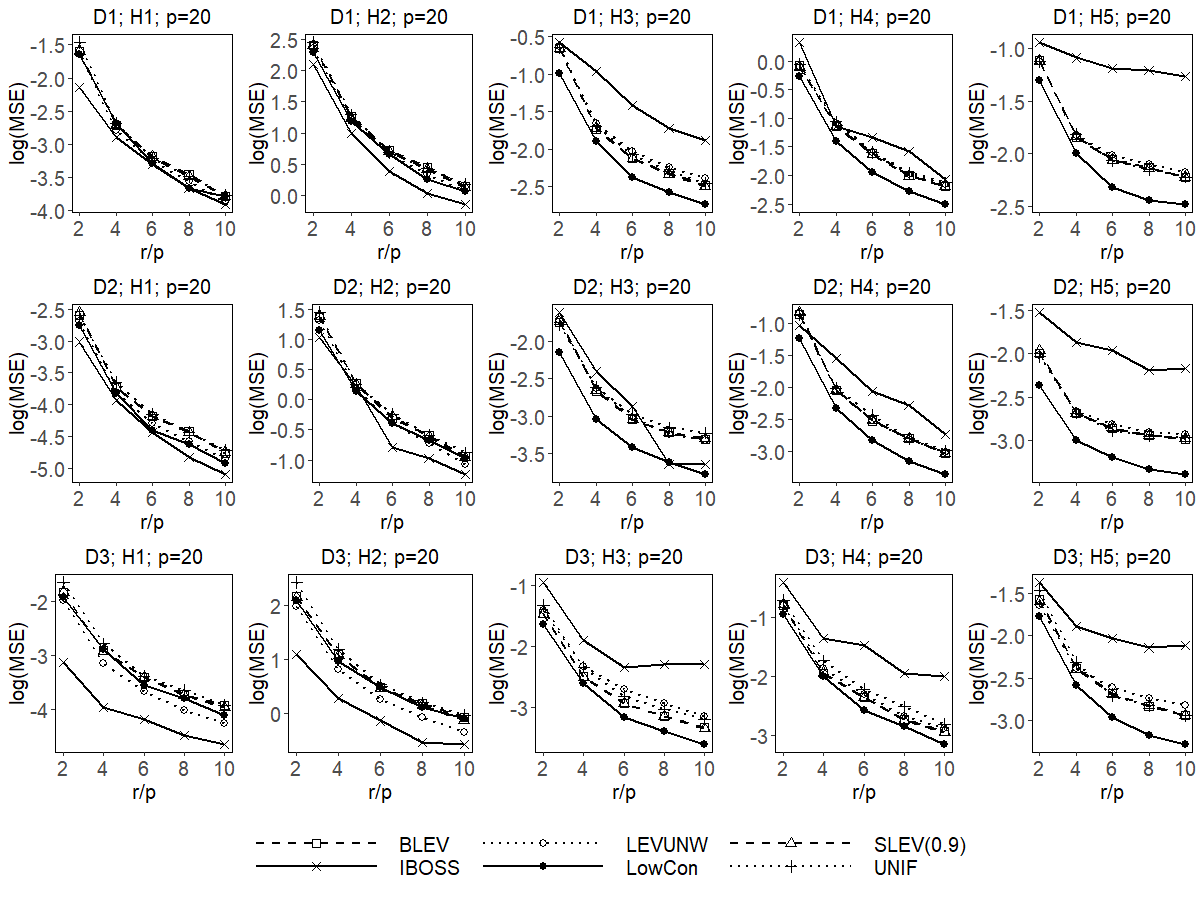}
 \caption{Comparison of different estimators when $p=20$. Each row represents a different data distribution ({\footnotesize\textbf{D1$-$3}}) and each column represents a different misspecification term ({\footnotesize\textbf{H1--5}}). }\label{Fig6}
\end{figure}

Finally, we observe that the proposed LowCon method is consistently better than the UNIF method.
Furthermore, LowCon has a decent performance in most of the cases, especially when the model is misspecified.
This observation indicates LowCon is able to give a robust estimator under various misspecified linear models.
Such success can be attributed to the fact that the proposed estimator has a relatively small upper bound for the worst-case MSE.

%%%%%%%%%%%%%%%%%%%%%%%%%%%%%%%%%%%%%%%%%%%%%%%%%%%%%%%%%%%%%%%%%%%%%%%%%%%%%%%%%%%%%%%%%%
%                                                                                        %
%                                                                                        %
%                                                                                        %
%                                                                                        %
%                                                                                        %
%                                                                                        %
%                                     Section 6                                           %
%                                                                                        %
%                                                                                        %
%                                                                                        %
%                                                                                        %
%                                                                                        %
%                                                                                        %
%%%%%%%%%%%%%%%%%%%%%%%%%%%%%%%%%%%%%%%%%%%%%%%%%%%%%%%%%%%%%%%%%%%%%%%%%%%%%%%%%%%%%%%%%%

\section{Real Data Analysis}\label{sec:real}
We now evaluate the performance of different SLS estimators on two real-world datasets.
One problem in real data analysis is that one does not know the true coefficient.
It is thus impossible to calculate the mean squared error of a coefficient estimate.
To overcome this problem, we consider the full-sample OLS estimator $\widehat{\bm{\beta}}_{OLS}$ and the following three estimators as the surrogates for the true coefficient $\bm{\beta}_0$.
One of them is the M-estimator $\widehat{\bm{\beta}}_{M}$, which is a well-known estimator in robust linear regression \citep{meer1991robust}.
M-estimators can be calculated by using iterated re-weighted least squares, and it is known that such an estimator is more robust to the potential outliers in the data, compared to the OLS estimator \citep{andersen2008modern}. 
We compute the M-estimator using the R package \texttt{MASS} with default parameters.
We also consider the estimator yielded by the cellwise robust M regression method (CRM), denoted by $\widehat{\bm{\beta}}_{CRM}$ \citep{filzmoser2020cellwise}.
Such a method improves the ordinary M-estimator by automatically identifying and replacing the outliers, resulting in a more robust estimator.
We implement the CRM method using the R package \texttt{crmReg}.
The results for the CRM method, however, are omitted in the second dataset, since the code did not stop within a reasonable amount of time.
The last estimator we considered is the cubic smoothing spline estimator for the ``null space" \citep{wahba1990spline,gu2013smoothing,zhang2018smoothing}, denoted by $\widehat{\bm{\beta}}_{SS}$.
We now briefly introduce the cubic smoothing spline estimator in the following.

Suppose the response $y_i$ and the vector of predictors $\x_i=(x_{i1},\ldots,x_{ip})^\T$ are related through the unknown functions $\eta$ such that $y_i=\eta(\x_i)+\epsilon_i$, where $\epsilon_i\stackrel{iid}{\sim}N(0,\sigma^2)$. 
A widely used approach for estimating $\eta$ is via minimizing the penalized likelihood function, 
\begin{equation}\label{eqn:pls2} 
           \frac{1}{n}\sum_{i=1}^n \left(y_i-\eta(\x_i)\right)^2+\lambda J(\eta), 
              \end{equation}
where $\lambda$ is the tuning parameter and $J(\eta)$ is a penalty term. 
We refer to \cite{wahba1990spline,gu2013smoothing,sun2020asympirical} for how to select the tuning parameter and how to construct the penalty term. 
The standard formulation of cubic smoothing splines performs the minimization of (\ref{eqn:pls2}) in a reproducing kernel Hilbert space $\mathcal{H}$.
In this case, the well-known representer theorem \citep{wahba1990spline} states that there exist vectors $\bm{\beta}=(\beta_1,\ldots,\beta_p)^\T$ and $\bm{c}=(c_1,\ldots,c_n)^\T$ such that the minimizer of (\ref{eqn:pls2}) is given by
$\eta(\x)=\sum_{j=1}^p \beta_j x_{ij}+\sum_{i=1}^n c_i H(\x_i,\x).$
Here, the bivariate function $H(\cdot,\cdot)$ is related to the reproducing kernel of $\mathcal{H}$, and we refer to \cite{gu2013smoothing} for technical details.
Let $\mathbf{H}$ be an $n\times n$ matrix where the $(i,j)$-th element equals $H(\x_i,\x_j)$. 
By construction of $\mathcal{H}$, one has $J(\eta)=\bm{c}^\T\mathbf{H}\bm{c}$ \citep{gu2013smoothing}.
Solving the minimization problem in (\ref{eqn:pls2}) thus is equivalent to solving
 \begin{equation}\label{eqn:pls4}
(\widehat{\bm \beta}_{SS}, \widehat{\bm{c}})=
\underset{\bm{\beta},\bm c}{\mathrm{argmin}}
  \frac{1}{n} (\bm y- \mathbf X \bm\beta-\mathbf H \bm c)^\T(\bm y- \mathbf X \bm\beta-\mathbf H \bm c)+\lambda \bm{c}^\T\mathbf{H}\bm{c}.
\end{equation}
We could then view the estimated $\widehat{\bm \beta}_{SS}$ in (\ref{eqn:pls4}) as the ``corrected" estimate of the true coefficient $\bm \beta_0$ that takes into consideration the misspecified terms quantified by $\mathbf{H}\widehat{\bm{c}}$.
We calculate such an estimate using the R package \texttt{gss} with the default parameters.

To compare the performance of different SLS estimators, we calculate the empirical MSE (EMSE) through one hundred replicates.
In the $i$th replicate, each subsampling method selects a subsample leading to an SLS estimator $\hB^{(i)}$.
For each of the four full-sample estimators ($\widehat{\bm{\beta}}_{OLS}$, $\widehat{\bm{\beta}}_{M}$, $\widehat{\bm{\beta}}_{CRM}$, and $\widehat{\bm{\beta}}_{SS}$), the corresponding EMSE is then calculated as
\begin{align*}
\mbox{EMSE}_{OLS} =\sum_{i=1}^{100}||\hB^{(i)}-\widehat{\bm{\beta}}_{OLS}||^2/100, \quad
\mbox{EMSE}_{M} =\sum_{i=1}^{100}||\hB^{(i)}-\widehat{\bm{\beta}}_{M}||^2/100,\\
\mbox{EMSE}_{CRM} =\sum_{i=1}^{100}||\hB^{(i)}-\widehat{\bm{\beta}}_{CRM}||^2/100, \quad \mbox{EMSE}_{SS} =\sum_{i=1}^{100}||\hB^{(i)}-\widehat{\bm{\beta}}_{SS}||^2/100.
\end{align*}

We emphasize that none of these full-sample estimators can be regarded as the gold standard.
However, a robust SLS estimator should at least be relatively ``close" to all of these estimators. 
That is to say, intuitively, a robust SLS estimator yields relatively small values of EMSE$_{OLS}$, EMSE$_M$, EMSE$_{CRM}$, and EMSE$_{SS}$.

Throughout this section, we set the parameter $\theta$ for the proposed LowCon method as 1.
We opt to choose the subsample size $r$ as $5p$, $10p$ and $20p$.
%, guided by the $10p$ rule \citep{loeppky2009choosing}.
The results in this section show that the proposed SLS estimator yields the smallest empirical mean squared error in almost all of the scenarios.
%The real data examples in this section simply illustrate that the observations we find in the simulation studies also hold for real data sets.

\subsection{Africa Soil Property Prediction}
Soil functional properties refer to the properties related to a soil's capacity to support essential ecosystem services, which include primary productivity, nutrient and water retention, and resistance to soil erosion \citep{hengl2015mapping}.
The soil functional properties are thus important for planning sustainable agricultural intensification and natural resource management.
To measure the soil functional properties in a target area, a natural paradigm is to first collect a sample of soil in this area, then analyze the sample using the technique of diffuse reflectance infrared spectroscopy \citep{shepherd2002development}.
Such a paradigm might be time-consuming or even impractical if the desired sample of soil from the target area is difficult to obtain.
Predicting the soil functional properties is thus a measurement-constrained problem.

With the help of greater availability of Earth remote sensing data, the practitioners are provided new opportunities to predict soil functional properties at unsampled locations.
One of the Earth remote sensing databases is provided by the Shuttle Radar Topography Mission (SRTM), which aims to generate the most complete high-resolution digital topographic database of Earth \citep{farr2007shuttle}.
In this section, we consider the \textit{Africa Soil Property Prediction} dataset, which contains the soil samples from 1157 different areas ($n=1157$).
We aim to analyze the relationship between the sand content, one of the soil functional properties, and the five features ($p=5$) derived from the SRTM data.
The features include compound topographic index calculated from SRTM elevation data (CTI), SRTM elevation data (ELEV), topographic Relief calculated from SRTM elevation data (RELI), mean annual precipitation of average long-term Tropical Rainfall Monitoring Mission data (TMAP), and modified Fournier index of average long-term Tropical Rainfall Monitoring Mission data (TMFI).
%We normalize the response (the sand content) and scale all five features to $[-1,1]$.
We assume the data follow the model,
\begin{eqnarray}\label{CCPP}
y_i=\beta_0+\beta_1CTI_i+\beta_2ELEV_i+\beta_3RELI_i+\beta_4TMAP_i+\beta_5TMFI_i+u_i, \quad i=1,2,\ldots,n,
\end{eqnarray}
where the random errors $u_i$ are i.i.d. and follow a non-centered normal distribution $N(h(\x_i),\sigma^2)$.
Here, $\x_i=(1,CTI_i,ELEV_i,RELI_i,TMAP_i,TMFI_i)^\T$ and $h(\cdot)$ represents a multivariate function that is unknown to the practitioner.
The postulated model is thus a misspecified linear model.
In our measurement-constrained setting, we assume the response vector is hidden unless explicitly requested.
%The goal is thus to estimate the true coefficient of the model~(\ref{CCPP}), i.e., $(\beta_0,\beta_1,\beta_2,\beta_3,\beta_4,\beta_5)^\T$, using a small subset of the response vector.
%Such a goal can be achieved by calculating the SLS estimator.
We then estimate the true coefficient of Model~(\ref{CCPP}), i.e., $(\beta_0,\beta_1,\beta_2,\beta_3,\beta_4,\beta_5)^\T$, using subsampling methods.

The subsampling methods considered here are uniform subsampling (UNIF), basic leverage subsampling (BLEV), shrinkage leverage subsampling (SLEV) with parameter $\alpha=0.9$, unweighted-leverage subsampling (LEVUNW) \citep{ma2015statistical,ma2015leveraging}, information-based optimal subset selection (IBOSS) \citep{wang2018information} and the proposed LowCon method. 
Table 1 summarizes the EMSEs for all six SLS estimators, and the best result in each row is in bold letter.
We observe that the proposed LowCon method yields the best result in every row.
%Such success can be attributed to the fact that the proposed estimator is a robust estimator.

\begin{table}[h!]\caption{EMSEs for the \textit{Africa Soil Property Prediction} dataset}\label{table1}
\centering
\begin{tabular}{ ccccccccc }
\hline
 &  & UNIF & BLEV & SLEV & LEVUNW & IBOSS & LowCon\\
\hline
&EMSE$_{OLS}$ & 5.39 & 2.92 & 3.44 & 2.09 & 34.87 & $\mathbf{1.18}$\\
$r=5p$ &EMSE$_{M}$  & 5.38 & 2.97 & 3.50 & 2.13 & 34.07 & $\mathbf{1.17}$\\
&EMSE$_{CRM}$ & 5.32 & 3.01 & 3.52 & 2.20 & 34.98 & $\mathbf{1.30}$\\
&EMSE$_{SS}$ & 9.82 & 6.71 & 7.31 & 5.71 & 43.59 & $\mathbf{4.89}$\\
\hline
&EMSE$_{OLS}$ & 1.34 & 1.13 & 1.37 & 0.88 & 18.62 & $\mathbf{0.48}$\\
$r=10p$ &EMSE$_{M}$  & 1.36 & 1.17 & 1.35 & 0.92 & 17.97 & $\mathbf{0.51}$\\
&EMSE$_{CRM}$ & 1.38 & 1.21 & 1.37 & 1.00 & 18.51 & $\mathbf{0.61}$\\
&EMSE$_{SS}$ & 5.49 & 5.04 & 5.71 & 4.55 & 27.09 & $\mathbf{4.06}$\\
\hline
&EMSE$_{OLS}$ & 0.61 & 0.45 & 0.64 & 0.38 & 2.84 & $\mathbf{0.27}$\\
$r=20p$ &EMSE$_{M}$  & 0.62 & 0.44 & 0.65 & 0.39 & 2.64 & $\mathbf{0.29}$\\
&EMSE$_{CRM}$ & 0.66 & 0.47 & 0.68 & 0.47 & 2.90 & $\mathbf{0.38}$\\
&EMSE$_{SS}$ & 4.68 & 4.71 & 4.72 & 4.25 & 8.30 & $\mathbf{4.01}$\\
\hline
\end{tabular}
\end{table}

\subsection{Diamond Price Prediction}
The second real-data example we consider is the \textit{Diamond Price Prediction} dataset \footnote{The dataset can be downloaded from https://www.kaggle.com/shivam2503/diamonds.}, which contains the prices and the features of around 54,000 diamonds.
Of interest is to analyze the relationship between the price of the diamond, and three continuous features ($p$=3): weight of the diamond (\textit{caret}), total depth percentage (\textit{depth}), and width of top of diamond relative to widest point (\textit{table}).

As the same setting used in Section 5.1, we assume the data follow a misspecified linear model,
\begin{eqnarray*}
y_i=\beta_0+\beta_1caret_i+\beta_2depth_i+\beta_3table_i+u_i, \quad i=1,2,\ldots,n.
\end{eqnarray*}
Here, the random errors $u_i$ are i.i.d. and follow a non-centered normal distribution $N(h(\x_i),\sigma^2)$, where $\x_i=(1,caret_i,depth_i,table_i)^\T$, and $h(\cdot)$ is a multivariate function that is unknown to the practitioner.
Note that the price of a diamond might be time-consuming or even impossible to obtain if the diamond has not been on the market yet.
We thus assume the value of the response vector is hidden unless explicitly requested, and we estimate the true coefficient using subsampling methods.

Table 2 summarizes the EMSEs for all the subsample estimators, and the best result in each row is in bold letter.
From Table 2, we observe that the proposed LowCon algorithm yields decent performance in all the cases and the best result in most of the cases.

\begin{table}[h!]\caption{EMSEs for the \textit{Diamond Price Prediction} data}\label{table2}
\centering
\begin{tabular}{ ccccccccc }
\hline
& & UNIF & BLEV & SLEV & LEVUNW & IBOSS & LowCon\\
\hline
 &EMSE$_{OLS}$ & 7.01 & 4.24 & 5.29 & 4.67 & 8.96 & $\mathbf{3.40}$\\
$r=5p$ &EMSE$_{M}$  & 7.08 & 4.52 & 5.60 & 4.96 & 6.07 & $\mathbf{4.09}$\\
 &EMSE$_{SS}$ & 11.16 & 9.13 & 10.13 & 9.69 & 10.32 & $\mathbf{8.36}$\\
\hline
 &EMSE$_{OLS}$ & 2.54 & 2.09 & 1.76 & 2.68 & 8.68 & $\mathbf{1.58}$\\
$r=10p$ &EMSE$_{M}$  & 2.88 & 2.37 & $\mathbf{2.15}$ & 2.89 & 5.82 & 2.19\\
 &EMSE$_{SS}$ & 7.41 & 6.83 & 6.53 & 7.54 & 10.18 & $\mathbf{6.28}$\\
\hline
 &EMSE$_{OLS}$ & 1.36 & 0.83 & 1.03 & 1.28 & 8.16 & $\mathbf{0.80}$\\
$r=20p$ &EMSE$_{M}$  & 1.72 & $\mathbf{1.17}$ & 1.40 & 1.32 & 5.38 & 1.33\\
 &EMSE$_{SS}$ & 6.27 & 5.50 & 5.91 & 5.67 & 9.56 & $\mathbf{5.45}$\\
\hline
\end{tabular}
\end{table}

%%%%%%%%%%%%%%%%%%%%%%%%%%%%%%%%%%%%%%%%%%%%%%%%%%%
% Section 8
%%%%%%%%%%%%%%%%%%%%%%%%%%%%%%%%%%%%%%%%%%%%%%%%%%%

\section{Concluding Remarks}
We considered the problem of estimating the coefficients in a misspecified linear model, under the measurement-constrained setting.
%To solve this problem, a natural paradigm is to first select a subsample, then request the corresponding value of the response vector, and finally, calculate the subsample least squares estimator.
When the model is correctly specified, various subsampling methods have been proposed to solve this problem.
When the model is misspecified, however, we found the worst-case bias for a subsample least squares estimator can be inflated to be arbitrarily large.
To overcome this problem, we aim to find a robust SLS estimator whose variance is bounded, and the worst-case bias is relatively small.
We found such a goal can be achieved by selecting a subsample whose information matrix has a relatively small condition number.
Motivated by this, we proposed the LowCon subsampling algorithm, which utilizes the orthogonal Latin hypercube design to identify sampling points.
We proved the proposed estimator based on the subsample has a finite mean squared error.
Furthermore, the bias of the proposed estimator has an upper bound, which approximately achieves the minimum value of the worst-case bias.
We evaluated the performance of the proposed estimator through extensive simulation and real data analysis.
Consistent with the theorem, the empirical results showed the proposed method has a robust performance.

The proposed algorithm can be easily extended to the cases when the predictor variables are categorical or are a mixture of categorical and continuous variables.
The key idea is to replace the  OLHD in Algorithm 1 by a proper design in a categorical (or mixture) design space.
We refer to \cite{minasny2006conditioned} and the reference therein for more discussion of such designs.
Intuitively, utilizing such designs in Algorithm~1 will result in a subsample in a categorical (or mixture) design space with relatively low ``condition number".

%  There are several important questions that require further studies. For example, the proposed algorithm becomes slower when the number of dimensions $p$ gets larger. A computationally efficient robust sampling algorithm under the high-dimension setting is required. For another example, to analyze the robustness of a sampling method, we only generalized the linear model to misspecified linear model. Note that there are other models with more general form. How to do robust sampling under these models is another interesting question.
%  In reality, the underlying model of big data is usually unknown. In high demand is the method which is not only able to tackle the big data problem but also is robust to various form of the true model. We hope this work can stimulate more research in this direction.

\section*{Acknowledgment}
The authors thank the associate editor and two anonymous reviewers for provided helpful comments on earlier drafts of the manuscript.
The authors would like to acknowledge the  support from the U.S. National Science  Foundation under grants DMS-1903226,  DMS-1925066, the U.S. National Institute of Health under grant R01GM122080.

\clearpage

\bigskip
%{\large\bf SUPPLEMENTARY MATERIAL}
%\appendix
\begin{center}
{\large\bf SUPPLEMENTARY MATERIALS}
\end{center}

\begin{description}

\item[Title:] Proofs of theoretical results and related materials.

\item[More Discussion of Figure 1:] More discussion on the performance of the subsampling methods in the example shown in Figure 1.

\item[More Simulation Results:] We illustrate the impact of different choices of the parameter $\theta$ through more simulation results.

\item[Proof of Lemma~\ref{lem1}:] A proof of the lemma stating the upper bound of the worst-case MSE.

\item[Proof of Theorem~\ref{mainthm}:] A proof of the main theorem stating the property of the proposed subsample least squares estimator.

\item[More discussion on Theorem 3.1:]
A discussion of the impact of $\theta$ on Theorem~3.1.

\item[R-code:] A package containing code to perform the methods described in the article, the simulation studies, and the real data analysis.

\item[Dataset:] Data are publicly available.

Africa soil property prediction dataset: https://www.kaggle.com/c/afsis-soil-properties/data

Diamond price prediction dataset: https://www.kaggle.com/shivam2503/diamonds

\end{description}

%\section{Appendix}
\appendix
\section{More Discussion of Figure 1}
In the example shown in Figure 1, one may wonder about the chances of poor performances of the existing subsampling methods.
To answer this question, we compare the proposed method (LowCon) with the uniform subsampling (UNIF) and the basic leverage subsampling method (BLEV) in terms of estimation error. 
We consider the mean squared error for each of the subsample least squares (SLS) based on one hundred replicates,
$\mbox{MSE}=\sum_{i=1}^{100}||\bm{\widehat{\B}}^{(i)}-\bm{\beta}_0||^2$/100, where $\bm{\widehat{\B}}^{(i)}$ represents the SLS estimator in the $i$th replication.
We consider different subsample sizes $r$ from ten to fifty.
Table 3 summarizes the results, and the best result in each row is in bold letters.
We observe that although the BLEV method performs better than UNIF, it still yields pretty large MSE. 
In other words, both UNIF and BLEV may result in unacceptable performance, especially when $r$ is small.
We also observe the proposed LowCon method yields the best result in every row, indicating the performance of LowCon is robust to the misspecification term. 

\begin{table}[h!]\caption{MSEs for the example in Figure 1}
\centering
\begin{tabular}{ cccc }
\hline
 &  UNIF & BLEV & LowCon \\
\hline
$r=10$ & 0.148 & 0.091 & $\mathbf{0.028}$\\
$r=20$ & 0.118 & 0.075 & $\mathbf{0.027}$\\
$r=30$ & 0.111 & 0.068 & $\mathbf{0.027}$\\
$r=40$ & 0.108 & 0.067 & $\mathbf{0.028}$\\
$r=50$ & 0.105 & 0.060 & $\mathbf{0.028}$\\
\hline
\end{tabular}
\end{table}

\section{More Simulation Results}
Recall that the design space in Algorithm 1 is set to be $\mathcal{X}=[\theta_{j1},\theta_{j2}]^p$, where $\theta_{j1}$ and $\theta_{j2}$ are the $\theta$-percentile and $(100-\theta)$-percentile of the $j$th column of the scaled data points, respectively. 
Throughout Section~4, we set $\theta=1$.
In this section, we illustrate the impact of different choices of the parameter $\theta$.
We consider three different choices of $\theta$, i.e., $\theta=0,5,10$.
Here, $\theta=0$ means the design space $\mathcal{X}=[0,1]^p$.
We let the dimension $p=10$.
Other simulation settings are the same as the ones we used in Section~4.
The results are shown in Figure~\ref{Fig1_new}, Figure~\ref{Fig2_new}, and Figure~\ref{Fig3_new}, respectively.

Consider the cases when $\theta=0$.
First, we observe that LowCon gives the best of the results in most of the cases when the model is correctly-specified, as shown in the leftmost column.
Such an observation is expected since when $\theta=0$, the LowCon method tends to select more data points with large leverage scores, resulting in a better estimation. 
We then observe that the performance of LowCon when $\theta=0$ is not as good as its performance when $\theta=1$, indicating that a positive value of $\theta$ is essential for LowCon to work well in misspecified models. 
Consider the cases when $\theta=10$.
We observe LowCon yields unacceptable performance in many cases. 
Such an observation indicates the choice $\theta=10$ yields a large sampling bias to LowCon, resulting in poor performance.
Finally, when $\theta=5$, we observe LowCon yields reasonably well performance in most of the cases.

In summary, it is essential to select a $\theta$ that is neither too large nor too small for LowCon to perform well in misspecified models.
In practice, we find $\theta\in[0.5,5]$ works reasonably well in most of the cases.

\begin{figure}
  \centering
      \includegraphics[width=16cm]{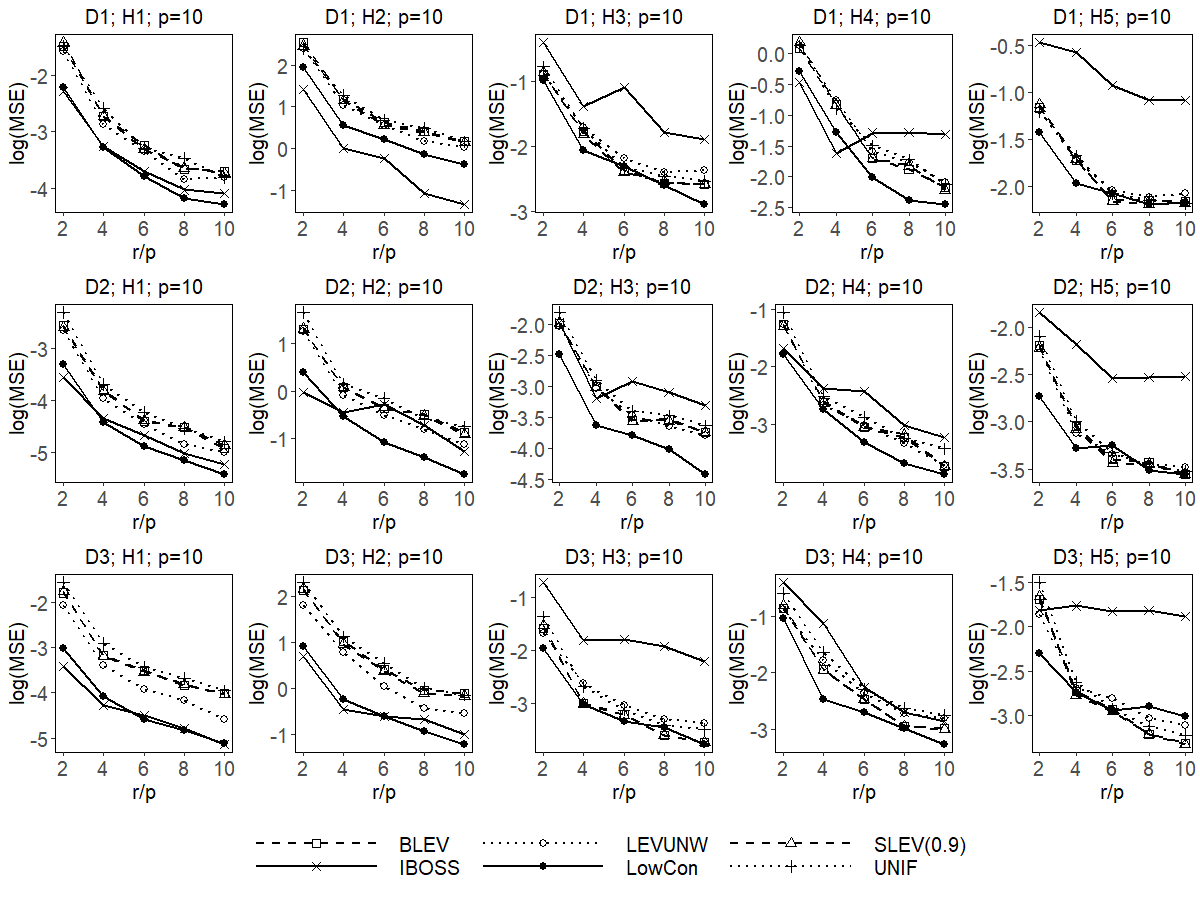}
 \caption{Comparison of different estimators when $p=10$, $\theta=0$. }\label{Fig1_new}
\end{figure}

\begin{figure}
  \centering
      \includegraphics[width=16cm]{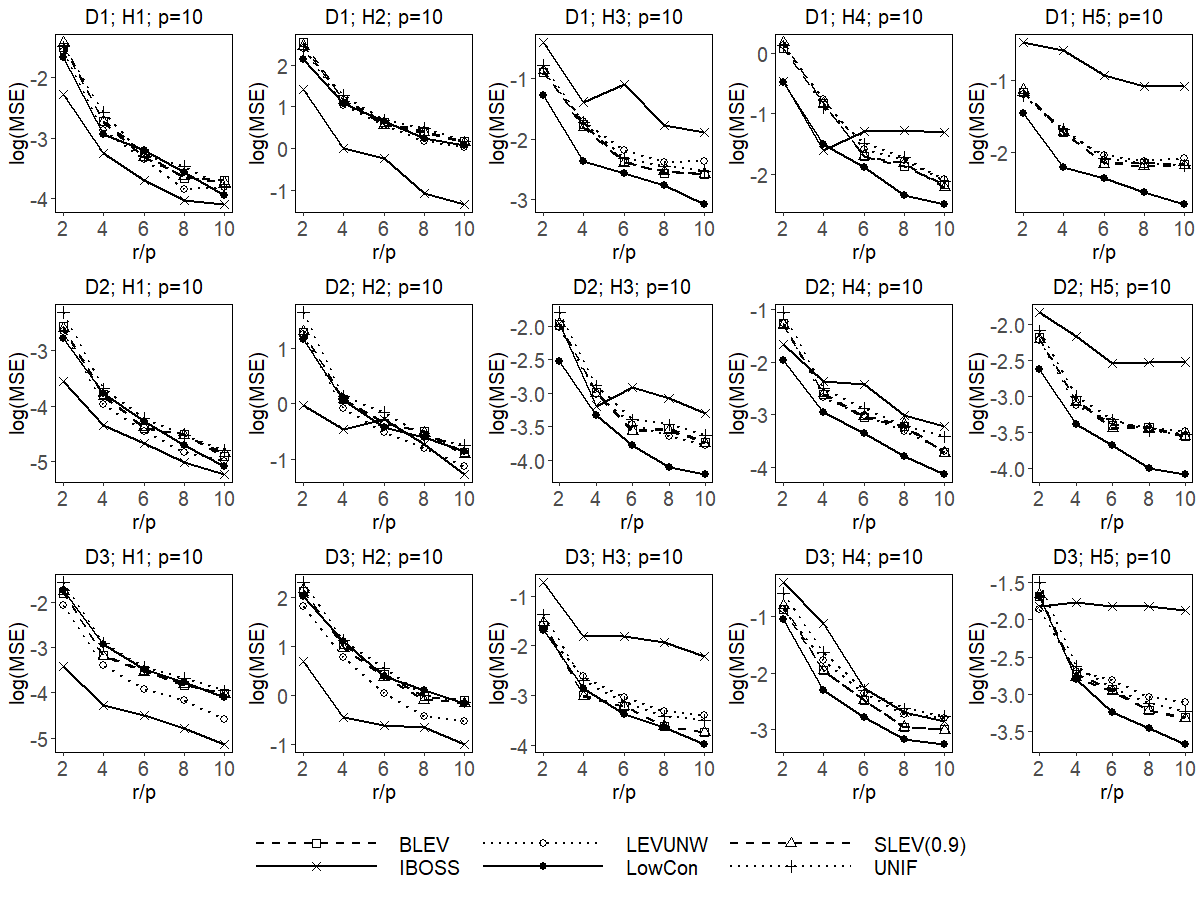}
 \caption{Comparison of different estimators when $p=10$, $\theta=5$. }\label{Fig2_new}
\end{figure}

\begin{figure}
  \centering
      \includegraphics[width=16cm]{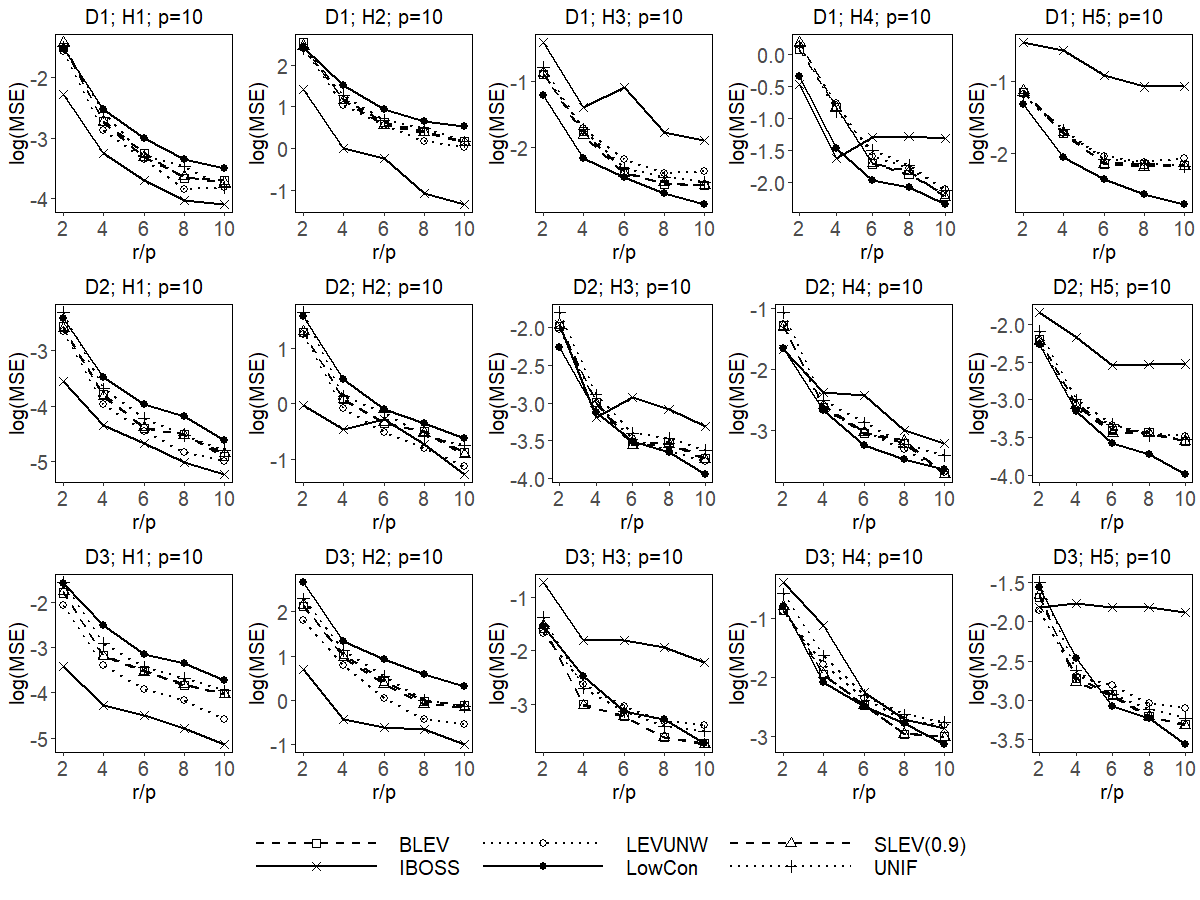}
 \caption{Comparison of different estimators when $p=10$, $\theta=10$. }\label{Fig3_new}
\end{figure}

\section{Proofs of Theoretical Results}

\subsection{Proof of Lemma~\ref{lem1}}
%\textbf{Proof of Lemma~\ref{lem1}}
\begin{proof}

Inequality (\ref{hhh1}) yields $||\h||^2 \leq \alpha^2\sum_{i=1}^r||\x^*_i||^2=\alpha^2\tr(\mathbf{X}^{*\T}\mathbf{X}^*).$
One thus has
\begin{align}
\h^\T\Q^\T\Q\h
& \leq \lambda_{max}(\Q^\T\Q)||\h||^2 \label{bias1}\leq \lambda_{max}(\Q^\T\Q) \cdot \alpha^2\tr(\mathbf{X}^{*\T}\mathbf{X}^*)\\
& = \lambda_{max}((\mathbf{X}^{*\T}\mathbf{X}^*)^{-1}) \cdot \alpha^2\tr(\mathbf{X}^{*\T}\mathbf{X}^*) \label{bias3} = \frac{\alpha^2\tr(\mathbf{X}^{*\T}\mathbf{X}^*)}{\lambda_{min}(\mathbf{X}^{*\T}\mathbf{X}^*)} .
\end{align}

Recall that $\bm{\mu}_{max}(\cdot)$ is the corresponding eigenvector to $\lambda_{max}(\cdot)$. 
The first equation in~(\ref{bias1}) holds when $\h=c\cdot\bm{\mu}_{max}(\Q^\T\Q)$ for some real number $c$, and the second equation in~(\ref{bias1}) holds when $||\h||^2=\alpha^2\tr(\mathbf{X}^{*\T}\mathbf{X}^*)$. 
As a result, both equations in~(\ref{bias1}) hold when
$\h=\sqrt{\alpha^2\tr(\mathbf{X}^{*\T}\mathbf{X}^*)}\cdot\bm{\mu}_{max}(\Q^\T\Q).$
The desired result follows directly after plugging Inequality (\ref{bias3}) into Equation (\ref{smse}).
\end{proof}

\subsection{Proof of Theorem~\ref{mainthm}}
%\textbf{Proof of Theorem~\ref{mainthm}}
\noindent The following Weyl's inequalities are needed in the proof.

\begin{theorem}
\textbf{Weyl's inequalities} \citep{horn1990matrix} Let $\mathbf{A}\in\mathbb{R}^{n\times d}$ and $\mathbf{B}\in\mathbb{R}^{n\times d}$ be two matrices and $t=\min\{n,d\}$. Let $s_1(\mathbf{A}) \geq s_2(\mathbf{A}) \geq \ldots \geq s_t(\mathbf{A}) \geq 0$, $s_1(\mathbf{B}) \geq s_2(\mathbf{B}) \geq \ldots \geq s_t(\mathbf{B}) \geq 0$ and $s_1(\mathbf{A}+\mathbf{B}) \geq s_2(\mathbf{A}+\mathbf{B}) \geq \ldots \geq s_t(\mathbf{A}+\mathbf{B}) \geq 0$ be the singular values of $\mathbf{A}$, $\mathbf{B}$ and $\mathbf{A+B}$, respectively. Then
\begin{eqnarray*}
\left|s_i(\mathbf{A+B})-s_i(\mathbf{A})\right|\leq s_1(\mathbf{B}), \quad i=1,\ldots,t.
\end{eqnarray*}
\end{theorem}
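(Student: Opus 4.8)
The plan is to prove the inequality through the Courant--Fischer min--max variational characterization of singular values, after which the bound follows from subadditivity of the operator norm together with a short symmetry argument. Throughout, I rely on the fact that for a matrix $\mathbf{M}\in\mathbb{R}^{n\times d}$ with $t=\min\{n,d\}$, the $i$th largest singular value admits the representation
\[
s_i(\mathbf{M})=\max_{\substack{S\subseteq\mathbb{R}^d\\ \dim S=i}}\ \min_{\substack{\x\in S\\ ||\x||=1}}||\mathbf{M}\x||,
\]
where the maximum ranges over all $i$-dimensional subspaces $S$ of $\mathbb{R}^d$. This is the singular-value analogue of the Hermitian Courant--Fischer theorem, and it is the only nonelementary ingredient; I would either cite it from \citet{horn1990matrix} or derive it by applying the eigenvalue min--max theorem to $\mathbf{M}^\T\mathbf{M}$, whose eigenvalues are the squares $s_i(\mathbf{M})^2$.

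The first step establishes the one-sided bound $s_i(\mathbf{A}+\mathbf{B})\leq s_i(\mathbf{A})+s_1(\mathbf{B})$. For any unit vector $\x$, the triangle inequality gives $||(\mathbf{A}+\mathbf{B})\x||\leq ||\mathbf{A}\x||+||\mathbf{B}\x||$, and since $s_1(\mathbf{B})=\max_{||\x||=1}||\mathbf{B}\x||$ is exactly the operator norm of $\mathbf{B}$, we have $||\mathbf{B}\x||\leq s_1(\mathbf{B})$. Hence $||(\mathbf{A}+\mathbf{B})\x||\leq ||\mathbf{A}\x||+s_1(\mathbf{B})$ holds pointwise. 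Fixing any $i$-dimensional subspace $S$ and minimizing over unit $\x\in S$, the additive constant $s_1(\mathbf{B})$ factors out of the minimization, so that $\min_{\x\in S,||\x||=1}||(\mathbf{A}+\mathbf{B})\x||\leq \min_{\x\in S,||\x||=1}||\mathbf{A}\x||+s_1(\mathbf{B})$. Taking the maximum over all such $S$ and invoking the variational formula on each side yields $s_i(\mathbf{A}+\mathbf{B})\leq s_i(\mathbf{A})+s_1(\mathbf{B})$.

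The second step handles the reverse inequality by symmetry: write $\mathbf{A}=(\mathbf{A}+\mathbf{B})+(-\mathbf{B})$ and apply the bound just proved with $\mathbf{A}+\mathbf{B}$ in the role of $\mathbf{A}$ and $-\mathbf{B}$ in the role of $\mathbf{B}$. Because $s_1(-\mathbf{B})=s_1(\mathbf{B})$, this gives $s_i(\mathbf{A})\leq s_i(\mathbf{A}+\mathbf{B})+s_1(\mathbf{B})$, that is, $s_i(\mathbf{A})-s_i(\mathbf{A}+\mathbf{B})\leq s_1(\mathbf{B})$. Combining the two one-sided estimates produces $|s_i(\mathbf{A}+\mathbf{B})-s_i(\mathbf{A})|\leq s_1(\mathbf{B})$ for every $i=1,\ldots,t$, as asserted.

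I expect the main obstacle to be the careful justification of the variational characterization of singular values, since the remainder is an elementary manipulation of the min--max expression; in particular one must check that the same subspace family (dimension $i$ inside $\mathbb{R}^d$) is used for all three matrices $\mathbf{A}$, $\mathbf{B}$, and $\mathbf{A}+\mathbf{B}$, which is legitimate because they share the common column dimension $d$. If a fully self-contained derivation is preferred, the alternative is to pass to the symmetric Jordan--Wielandt dilation $\left(\begin{smallmatrix}\mathbf{0}&\mathbf{M}\\ \mathbf{M}^\T&\mathbf{0}\end{smallmatrix}\right)$, whose nonzero eigenvalues are $\pm s_i(\mathbf{M})$, and to invoke the classical Weyl inequality for eigenvalues of Hermitian matrices; the additive perturbation structure is preserved under the dilation, so the eigenvalue bound transfers directly to the singular values.
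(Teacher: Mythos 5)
Your proof is correct, and the first thing to note is that the paper itself offers no proof of this statement: Weyl's inequalities are quoted as a known result with a citation to \citet{horn1990matrix} and then used as an external ingredient in the proof of Theorem~3.1. Your argument therefore supplies a self-contained derivation where the paper gives none. The route you take is the standard one and every step is sound: the max--min characterization $s_i(\mathbf{M})=\max_{\dim S=i}\min_{\x\in S,\,||\x||=1}||\mathbf{M}\x||$ (valid for $i=1,\ldots,t$, and obtainable, as you say, from Courant--Fischer applied to $\mathbf{M}^\T\mathbf{M}$), the pointwise bound $||(\mathbf{A}+\mathbf{B})\x||\leq||\mathbf{A}\x||+s_1(\mathbf{B})$, the observation that the additive constant $s_1(\mathbf{B})$ passes through both the minimum over unit vectors in a fixed $i$-dimensional subspace and the subsequent maximum over subspaces, giving $s_i(\mathbf{A}+\mathbf{B})\leq s_i(\mathbf{A})+s_1(\mathbf{B})$, and finally the symmetry step $\mathbf{A}=(\mathbf{A}+\mathbf{B})+(-\mathbf{B})$ with $s_1(-\mathbf{B})=s_1(\mathbf{B})$ to get the reverse bound. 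One small simplification of your closing worry: only $\mathbf{A}$ and $\mathbf{A}+\mathbf{B}$ require the variational formula at level $i$; the matrix $\mathbf{B}$ enters solely through its operator norm $s_1(\mathbf{B})$, so the ``same subspace family'' issue concerns just those two matrices, and is immediate since both act on the common domain $\mathbb{R}^d$. Your alternative via the Jordan--Wielandt dilation $\left(\begin{smallmatrix}\mathbf{0}&\mathbf{M}\\ \mathbf{M}^\T&\mathbf{0}\end{smallmatrix}\right)$, whose nonzero eigenvalues are $\pm s_i(\mathbf{M})$, is also a legitimate and common route in the literature, reducing the claim to Weyl's eigenvalue inequality for symmetric matrices; either version would serve the paper's purpose, since the proof of Theorem~3.1 only invokes the cases $i=1$ and $i=p$ of the inequality.
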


\begin{proof}[Proof of Theorem~\ref{mainthm}]

Let $i=1$; the Weyl's inequalities yield
\begin{eqnarray}\label{eqn_16}
s_1(\R_L)=s_1(\LL+\mathbf{D})\leq s_1(\LL)+s_1(\mathbf{D}).
\end{eqnarray}
Let $i=p$; Weyl's inequalities yield
\begin{eqnarray}\label{eqn_17}
s_p(\R_L)=s_p(\LL+\mathbf{D})\geq s_p(\LL)-s_1(\mathbf{D}).
\end{eqnarray}

Recall that, in Theorem~\ref{mainthm}, we assume $ s_p(\LL)-s_1(\mathbf{D})>0$.
Combining Inequality (\ref{eqn_16}) and Inequality (\ref{eqn_17}) thus yields
\begin{eqnarray}\label{eqn_18}    
\kappa(\mathbf{X}_L^{*\T}\R_L)=\left(\frac{s_1(\R_L)}{ s_p(\R_L)}\right)^2\leq \left(\frac{s_1(\LL)+s_1(\mathbf{D})}{s_p(\LL)-s_1(\mathbf{D})}\right)^2.
\end{eqnarray}

Performing a Taylor expansion of the right-hand side of  Inequality (\ref{eqn_18}), which can be viewed as a function of $s_1(\mathbf{D})$, around the point $0$ yields
\begin{align}
\left(\frac{s_1(\LL)+s_1(\mathbf{D})}{s_p(\LL)-s_1(\mathbf{D})}\right)^2
&=\left(\frac{s_1(\LL)}{s_p(\LL)}\right)^2+2\left(\frac{s_1(\LL)\left(s_1(\LL)+s_p(\LL)\right)}{s_p(\LL)^3}\right)s_1(\mathbf{D})+W_1\nonumber\\
&\leq\kappa(\LL^\T\LL)+4\frac{s_1(\LL)^2}{s_p(\LL)^3}s_1(\mathbf{D})+W_1\nonumber\\
&=\kappa(\LL^\T\LL)+4\frac{\kappa(\LL^\T\LL)}{s_p(\LL)}s_1(\mathbf{D})+W_1\label{eqn_23},
\end{align}
where $W_1=o(s_1(\mathbf{D}))$ is the remainder.
Plugging Inequality (\ref{eqn_23}) back into (\ref{eqn_18}) yields
\begin{eqnarray}
\kappa(\mathbf{X}_L^{*\T}\R_L)\leq\kappa(\LL^\T\LL)+4\frac{\kappa(\LL^\T\LL)}{s_p(\LL)}s_1(\mathbf{D})+W_1.\label{eqn_24}
\end{eqnarray}

We now derive an upper bound for the first term on the right-hand side of Inequality (\ref{smse}).
Note that
\begin{align}
\tr[(\mathbf{X}_L^{*\T}\R_L)^{-1}]&\leq p\lambda_{max}((\mathbf{X}_L^{*\T}\R_L)^{-1})=\frac{p}{s_p(\R_L)^2}\leq\frac{p}{(s_p(\LL)-s_1(\mathbf{D}))^2}\label{eqn_20},
\end{align}
where Inequality (\ref{eqn_17}) is used in the last step.

By performing a Taylor expansion of the right-hand side of Inequality (\ref{eqn_20}) around the point $0$, one has
\begin{align}\label{eqn_21}
\frac{p}{(s_p(\LL)-s_1(\mathbf{D}))^2}&=
\frac{p}{s_p(\LL)^2}+2\frac{\sqrt{p}}{s_p(\LL)^2}s_1(\mathbf{D})+W_2,
\end{align}
where $W_2=o(s_1(\mathbf{D}))$ is the remainder.
Plugging Inequality (\ref{eqn_21}) back into (\ref{eqn_20}) yields
\begin{eqnarray}\label{eqn_22}
\tr[(\mathbf{X}_L^{*\T}\R_L)^{-1}]\leq\frac{p}{s_p(\LL)^2}+2\frac{\sqrt{p}}{s_p(\LL)^2}s_1(\mathbf{D})+W_2.
\end{eqnarray}

Finally, plugging both Inequality (\ref{eqn_24}) and (\ref{eqn_22}) in Inequality (\ref{smse}) yields
\begin{align*}
MSE(\tBR)&\leq \sigma^2\left(\frac{p}{s_p(\LL)^2}+2\frac{\sqrt{p}}{s_p(\LL)^2}s_1(\mathbf{D})+W_1\right)+\alpha^2p\left(\kappa(\LL^\T\LL)+4\frac{\kappa(\LL^\T\LL)}{s_p(\LL)}s_1(\mathbf{D})+W_2\right)\\
&=\left(\frac{\sigma^2}{s_p(\LL)^2}+\alpha^2\kappa(\LL^\T\LL)\right)p+\left(\frac{2\sigma^2\sqrt{p}}{s_p(\LL)^2}+\frac{4\alpha^2p\kappa(\LL^\T\LL)}{s_p(\LL)}\right)s_1(\mathbf{D})+\sigma^2W_1+\alpha^2pW_2\\
&\leq \sigma^2p^2\frac{\kappa(\LL^\T\LL)}{tr(\LL^\T\LL)}+\alpha^2p\kappa(\LL^\T\LL)+O(s_1(\mathbf{D})).
\end{align*}
The fact that $tr(\LL^\T\LL))\leq p\lambda_{max}(\LL^\T\LL))=p\kappa(\LL^\T\LL))s_p(\LL)^2$ is used in the last step.
This completes the proof.

\section{More discussion on Theorem 3.1}
We now discuss the impact of $\theta$ on Theorem~3.1.
Recall that we have $\mathcal{X}_\theta=[\theta_{j1},\theta_{j2}]^p$.
For simplicity, we assume all the marginal distributions of the probability density function are symmetric, i.e., $\theta_{j1}=-\theta_{j2}$, for $j=1,\ldots,p$.
Note that the data are first scaled to $[-1,1]^p$, and thus we have $-1<\theta_{j1}<0<\theta_{j2}<1$.
Let $\LL_\theta$ to denote the design matrix generated from $\mathcal{X}_\theta$.
By the definition of orthogonal Latin hypercube design, it is easy to check that $\kappa(\LL_\theta^\T\LL_\theta) = \kappa(\LL^\T\LL)$.
We also have 
\begin{eqnarray*}
tr(\LL_\theta^\T\LL_\theta) = tr(\LL^\T\LL)\times \prod_{j=1}^p (1-\theta_{j2})^2.
\end{eqnarray*}
In summary, we have 
\begin{align}\label{eqn_10}
MSE(\widetilde{\bm{\beta}}_{\R_L})&\leq
\sigma^2p^2\frac{\kappa(\LL_\theta^\T\LL_\theta)}{tr(\LL_\theta^\T\LL_\theta)}+\alpha^2p\kappa(\LL_\theta^\T\LL_\theta)+W\nonumber\\
&=\sigma^2p^2\frac{\kappa(\LL^\T\LL)}{tr(\LL^\T\LL)\times \prod_{j=1}^p (1-\theta_{j2})^2}+\alpha^2p\kappa(\LL^\T\LL)+W.
\end{align}
Inequality~(\ref{eqn_10}) indicates that a large $\theta$ is associated with a larger upper bound of $MSE(\widetilde{\bm{\beta}}_{\R_L})$.
Furthermore, for fixed $\theta$, a ``heavy-tailed" probability density function also yields a larger upper bound of $MSE(\widetilde{\bm{\beta}}_{\R_L})$.

\end{proof}

\bibliography{ref}

\begin{thebibliography}{10}

\bibitem{AI2020101512}
M.~Ai, F.~Wang, J.~Yu, and H.~Zhang.
\newblock Optimal subsampling for large-scale quantile regression.
\newblock {\em Journal of Complexity}, pages in press,
  DOI:10.1016/j.jco.2020.101512, 2020.

\bibitem{ai2019optimal}
M.~Ai, J.~Yu, H.~Zhang, and H.~Wang.
\newblock Optimal subsampling algorithms for big data regressions.
\newblock {\em Statistica Sinica}, pages in press, DOI:10.5705/ss.202018.0439,
  2019.

\bibitem{alaoui2015fast}
A.~Alaoui and M.~W. Mahoney.
\newblock Fast randomized kernel ridge regression with statistical guarantees.
\newblock In {\em Advances in Neural Information Processing Systems}, pages
  775--783, 2015.

\bibitem{andersen2008modern}
R.~Andersen.
\newblock {\em Modern Methods for Robust Regression}.
\newblock SAGE, 2007.

\bibitem{box1959basis}
G.~E. Box and N.~R. Draper.
\newblock A basis for the selection of a response surface design.
\newblock {\em Journal of the American Statistical Association},
  54(287):622--654, 1959.

\bibitem{casella1985condition}
G.~Casella.
\newblock Condition numbers and minimax ridge regression estimators.
\newblock {\em Journal of the American Statistical Association},
  80(391):753--758, 1985.

\bibitem{cioppa2007efficient}
T.~M. Cioppa and T.~W. Lucas.
\newblock Efficient nearly orthogonal and space-filling {L}atin hypercubes.
\newblock {\em Technometrics}, 49(1):45--55, 2007.

\bibitem{cochran2007sampling}
W.~G. Cochran.
\newblock {\em Sampling Techniques}.
\newblock John Wiley \& Sons, 2007.

\bibitem{derezinski2018leveraged}
M.~Derezinski, M.~K. Warmuth, and D.~J. Hsu.
\newblock Leveraged volume sampling for linear regression.
\newblock In {\em Advances in Neural Information Processing Systems}, pages
  2510--2519, 2018.

\bibitem{drineas2012fast}
P.~Drineas, M.~Magdon-Ismail, M.~W. Mahoney, and D.~P. Woodruff.
\newblock Fast approximation of matrix coherence and statistical leverage.
\newblock {\em Journal of Machine Learning Research}, 13:3475--3506, 2012.

\bibitem{fang2005design}
K.-T. Fang, R.~Li, and A.~Sudjianto.
\newblock {\em Design and Modeling for Computer Experiments}.
\newblock Chapman and Hall/CRC, 2005.

\bibitem{fang2002centered}
K.-T. Fang, C.-X. Ma, and P.~Winker.
\newblock Centered {L}2-discrepancy of random sampling and {L}atin hypercube
  design, and construction of uniform designs.
\newblock {\em Mathematics of Computation}, 71(237):275--296, 2002.

\bibitem{farr2007shuttle}
T.~G. Farr, P.~A. Rosen, E.~Caro, R.~Crippen, R.~Duren, S.~Hensley, M.~Kobrick,
  M.~Paller, E.~Rodriguez, L.~Roth, et~al.
\newblock The shuttle radar topography mission.
\newblock {\em Reviews of Geophysics}, 45(2), 2007.

\bibitem{filzmoser2020cellwise}
P.~Filzmoser, S.~H{\"o}ppner, I.~Ortner, S.~Serneels, and T.~Verdonck.
\newblock Cellwise robust m regression.
\newblock {\em Computational Statistics \& Data Analysis}, page 106944, 2020.

\bibitem{gu2013smoothing}
C.~Gu.
\newblock {\em Smoothing Spline ANOVA Models}.
\newblock Springer Science \& Business Media, 2013.

\bibitem{hengl2015mapping}
T.~Hengl, G.~B. Heuvelink, B.~Kempen, J.~G. Leenaars, M.~G. Walsh, K.~D.
  Shepherd, A.~Sila, R.~A. MacMillan, J.~M. de~Jesus, L.~Tamene, et~al.
\newblock Mapping soil properties of {A}frica at 250 m resolution: random
  forests significantly improve current predictions.
\newblock {\em PLoS ONE}, 10(6), 2015.

\bibitem{horn1990matrix}
R.~A. Horn and C.~R. Johnson.
\newblock {\em Matrix Analysis}.
\newblock Cambridge University Press, 1990.

\bibitem{joseph2016space}
V.~R. Joseph.
\newblock Space-filling designs for computer experiments: A review.
\newblock {\em Quality Engineering}, 28(1):28--35, 2016.

\bibitem{joseph2008orthogonal}
V.~R. Joseph and Y.~Hung.
\newblock Orthogonal-maximin {L}atin hypercube designs.
\newblock {\em Statistica Sinica}, pages 171--186, 2008.

\bibitem{kiefer1975optimal}
J.~Kiefer.
\newblock Optimal design: Variation in structure and performance under change
  of criterion.
\newblock {\em Biometrika}, 62(2):277--288, 1975.

\bibitem{kleijnen2008design}
J.~P. Kleijnen.
\newblock {\em Design and Analysis of Simulation Experiments}.
\newblock Springer, 2008.

\bibitem{ma2015statistical}
P.~Ma, M.~W. Mahoney, and B.~Yu.
\newblock A statistical perspective on algorithmic leveraging.
\newblock {\em Journal of Machine Learning Research}, 16:861--911, 2015.

\bibitem{ma2015leveraging}
P.~Ma and X.~Sun.
\newblock Leveraging for big data regression.
\newblock {\em Wiley Interdisciplinary Reviews: Computational Statistics},
  7(1):70--76, 2015.

\bibitem{ma2020asymptotic}
P.~Ma, X.~Zhang, X.~Xing, J.~Ma, and M.~W. Mahoney.
\newblock Asymptotic analysis of sampling estimators for randomized numerical
  linear algebra algorithms.
\newblock {\em The 23nd International Conference on Artificial Intelligence and
  Statistics. 2020}, 2020.

\bibitem{mahoney2011randomized}
M.~W. Mahoney et~al.
\newblock Randomized algorithms for matrices and data.
\newblock {\em Foundations and Trends{\textregistered} in Machine Learning},
  3(2):123--224, 2011.

\bibitem{mckay2000comparison}
M.~D. McKay, R.~J. Beckman, and W.~J. Conover.
\newblock A comparison of three methods for selecting values of input variables
  in the analysis of output from a computer code.
\newblock {\em Technometrics}, 42(1):55--61, 2000.

\bibitem{meer1991robust}
P.~Meer, D.~Mintz, A.~Rosenfeld, and D.~Y. Kim.
\newblock Robust regression methods for computer vision: A review.
\newblock {\em International journal of computer vision}, 6(1):59--70, 1991.

\bibitem{meng2017effective}
C.~Meng, Y.~Wang, X.~Zhang, A.~Mandal, P.~Ma, and W.~Zhong.
\newblock Effective statistical methods for big data analytics.
\newblock {\em Handbook of Research on Applied Cybernetics and Systems
  Science}, page 280, 2017.

\bibitem{meng2020more}
C.~Meng, X.~Zhang, J.~Zhang, W.~Zhong, and P.~Ma.
\newblock More efficient approximation of smoothing splines via space-filling
  basis selection.
\newblock {\em Biometrika}, 107:723--735, 2020.

\bibitem{minasny2006conditioned}
B.~Minasny and A.~B. McBratney.
\newblock A conditioned {L}atin hypercube method for sampling in the presence
  of ancillary information.
\newblock {\em Computers \& Geosciences}, 32(9):1378--1388, 2006.

\bibitem{park1994optimal}
J.-S. Park.
\newblock Optimal {L}atin hypercube designs for computer experiments.
\newblock {\em Journal of Statistical Planning and Inference}, 39(1):95--111,
  1994.

\bibitem{pena1999fast}
D.~Pena and V.~Yohai.
\newblock A fast procedure for outlier diagnostics in large regression
  problems.
\newblock {\em Journal of the American Statistical Association},
  94(446):434--445, 1999.

\bibitem{pukelsheim2006optimal}
F.~Pukelsheim.
\newblock {\em Optimal {D}esign of {E}xperiments}.
\newblock SIAM, 2006.

\bibitem{sacks1978linear}
J.~Sacks and D.~Ylvisaker.
\newblock Linear estimation for approximately linear models.
\newblock {\em The Annals of Statistics}, pages 1122--1137, 1978.

\bibitem{settles2012active}
B.~Settles.
\newblock Active learning.
\newblock {\em Synthesis Lectures on Artificial Intelligence and Machine
  Learning}, 6(1):1--114, 2012.

\bibitem{shepherd2002development}
K.~D. Shepherd and M.~G. Walsh.
\newblock Development of reflectance spectral libraries for characterization of
  soil properties.
\newblock {\em Soil Science Society of America Journal}, 66(3):988--998, 2002.

\bibitem{stein1987large}
M.~Stein.
\newblock Large sample properties of simulations using {L}atin hypercube
  sampling.
\newblock {\em Technometrics}, 29(2):143--151, 1987.

\bibitem{steinberg2006construction}
D.~M. Steinberg and D.~K. Lin.
\newblock A construction method for orthogonal {L}atin hypercube designs.
\newblock {\em Biometrika}, pages 279--288, 2006.

\bibitem{sun2020asympirical}
X.~Sun, W.~Zhong, and P.~Ma.
\newblock An asympirical smoothing parameters selection approach for smoothing
  spline anova models in large samples.
\newblock {\em arXiv preprint arXiv:2004.10271}, 2020.

\bibitem{tang1993orthogonal}
B.~Tang.
\newblock Orthogonal array-based {L}atin hypercubes.
\newblock {\em Journal of the American Statistical Association},
  88(424):1392--1397, 1993.

\bibitem{thompson2012simple}
S.~K. Thompson.
\newblock Simple random sampling.
\newblock {\em Sampling, Third Edition}, pages 9--37, 2012.

\bibitem{trefethen1997numerical}
L.~N. Trefethen and D.~Bau.
\newblock {\em Numerical Linear Algebra}.
\newblock SIAM, 1997.

\bibitem{tsao2012subsampling}
M.~Tsao and X.~Ling.
\newblock Subsampling method for robust estimation of regression models.
\newblock {\em Open Journal of Statistics}, 2(03):281, 2012.

\bibitem{wahba1990spline}
G.~Wahba.
\newblock {\em Spline Models for Observational Data}.
\newblock SIAM, 1990.

\bibitem{wang2020optimal}
H.~Wang and Y.~Ma.
\newblock Optimal subsampling for quantile regression in big data.
\newblock {\em arXiv preprint arXiv:2001.10168}, 2020.

\bibitem{wang2020lhd2}
H.~Wang, Q.~Xiao, and A.~Mandal.
\newblock Lhd: An {R} package for efficient {L}atin hypercube designs with
  flexible sizes.
\newblock {\em arXiv preprint arXiv:2010.09154}, 2020.

\bibitem{wang2020lhd}
H.~Wang, Q.~Xiao, and A.~Mandal.
\newblock Lhd: {L}atin hypercube designs ({LHD}s) algorithms.
\newblock 2020.
\newblock R package version 1.1.0.

\bibitem{wang2018information}
H.~Wang, M.~Yang, and J.~Stufken.
\newblock Information-based optimal subdata selection for big data linear
  regression.
\newblock {\em Journal of the American Statistical Association}, pages 1--13,
  2018.

\bibitem{wang2018optimal}
H.~Wang, R.~Zhu, and P.~Ma.
\newblock Optimal subsampling for large sample logistic regression.
\newblock {\em Journal of the American Statistical Association},
  113(522):829--844, 2018.

\bibitem{wang2017computationally}
Y.~Wang, A.~W. Yu, and A.~Singh.
\newblock On computationally tractable selection of experiments in
  measurement-constrained regression models.
\newblock {\em The Journal of Machine Learning Research}, 18(1):5238--5278,
  2017.

\bibitem{wu2011experiments}
C.~F.~J. Wu and M.~S. Hamada.
\newblock {\em Experiments: Planning, Analysis, and Optimization}.
\newblock John Wiley \& Sons, 2011.

\bibitem{xie2019online}
R.~Xie, Z.~Wang, S.~Bai, P.~Ma, and W.~Zhong.
\newblock Online decentralized leverage score sampling for streaming
  multidimensional time series.
\newblock In {\em The 22nd International Conference on Artificial Intelligence
  and Statistics}, pages 2301--2311, 2019.

\bibitem{ye1998orthogonal}
K.~Q. Ye.
\newblock Orthogonal column {L}atin hypercubes and their application in
  computer experiments.
\newblock {\em Journal of the American Statistical Association},
  93(444):1430--1439, 1998.

\bibitem{yu2020optimal}
J.~Yu, H.~Wang, M.~Ai, and H.~Zhang.
\newblock Optimal distributed subsampling for maximum quasi-likelihood
  estimators with massive data.
\newblock {\em Journal of the American Statistical Association}, pages in
  press, DOI: 10.1080/01621459.2020.1773832, 2020.

\bibitem{zhang2018smoothing}
J.~Zhang, H.~Jin, Y.~Wang, X.~Sun, P.~Ma, and W.~Zhong.
\newblock Smoothing spline {A}{N}{O}{V}{A} models and their applications in
  complex and massive datasets.
\newblock {\em Topics in Splines and Applications}, page~63, 2018.

\bibitem{zhang2018statistical}
X.~Zhang, R.~Xie, and P.~Ma.
\newblock Statistical leveraging methods in big data.
\newblock In {\em Handbook of Big Data Analytics}, pages 51--74. Springer,
  2018.

\bibitem{zhu2005semi}
X.~Zhu, J.~Lafferty, and R.~Rosenfeld.
\newblock {\em Semi-supervised learning with graphs}.
\newblock PhD thesis, Carnegie Mellon University, {L}anguage {T}echnologies
  {I}nstitute, {S}chool of {C}omputer {S}cience, 2005.

\end{thebibliography}
\bibliographystyle{chicago}
\end{document}